
\documentclass[letterpaper, 10 pt, conference]{ieeeconf}  

\IEEEoverridecommandlockouts                              
\overrideIEEEmargins


\usepackage{mathrsfs}
\usepackage{amsmath} 
\usepackage{amssymb}  
\usepackage{cite}
\usepackage{amsthm}
\usepackage{enumerate}
\usepackage{subcaption}
\usepackage{graphicx}
\usepackage{mathtools, cuted}
\usepackage{stfloats}
\usepackage{comment}
\usepackage{xcolor}
\usepackage[linesnumbered,ruled,vlined]{algorithm2e}

\newcommand{\R}[1]{\mathbb{R}^{#1}}
\newcommand{\linf}[1]{\left\|#1 \right\|_{\infty}}
\newcommand{\Ex}[1]{\mathcal{E}\left( #1\right)}
\newcommand{\Pt}[1]{\mathcal{P}_{#1}}

\newcommand{\lemref}[1]{(Lem.\ref{#1})}
\newcommand{\thmref}[1]{(Thm.\ref{#1})}
\newcommand{\corref}[1]{(Coro.\ref{#1})}
\newcommand{\propref}[1]{(Prop.\ref{#1})}
\newcommand{\assref}[1]{(Ass.\ref{#1})}
\newcommand{\secref}[1]{(Sec.\ref{#1})}
\newcommand{\conref}[1]{(Const.\ref{#1})}
\newcommand{\algoref}[1]{(Algo.\ref{#1})}
\newcommand{\figref}[1]{(Fig.\ref{#1})}
\newcommand{\defref}[1]{(Def.\ref{#1})}

\theoremstyle{plain}

\theoremstyle{definition}
\newtheorem{coro}{Corollary}[section]
\newtheorem{thm}{Theorem}[section]
\newtheorem{lem}{Lemma}[section] 
\newtheorem{prop}{Proposition}

\newtheorem{defn}[thm]{Definition} 
\newtheorem{rem}{Remark}
\newtheorem{asp}{Assumption}
\newtheorem{con}{Constraint}

\newcommand{\up}[2]{#1 \leftarrow #2}
\renewcommand{\c}[1]{\mathcal{#1}}
\excludecomment{lv}
\includecomment{sv}
\newenvironment{col}{\color{blue}}{}

\title{\LARGE \bf
Scalable Robust Adaptive Control from the System Level Perspective\\
 }


\author{Dimitar Ho and John C. Doyle
\thanks{Dimitar Ho and John C. Doyle are with the Department of Computing and Mathematical Sciences, California Institute of Technology, Pasadena, CA. 
        {\tt\small dho@caltech.edu, doyle@caltech.edu}}%
}

\begin{document}

\maketitle
\thispagestyle{empty}
\pagestyle{empty}

\begin{abstract}

\noindent We will present a new general framework for robust and adaptive control that allows for distributed and scalable learning and control of large systems of interconnected linear subsystems. The control method is demonstrated for a linear time-invariant system with bounded parameter uncertainties, disturbances and noise. The presented scheme continuously collects measurements to reduce the uncertainty about the system parameters and adapts dynamic robust controllers online in a stable and performance-improving way. A key enabler for our approach is choosing a time-varying dynamic controller implementation, inspired by recent work on \textit{System Level Synthesis} \cite{slsacc}. We leverage a new robustness result for this implementation to propose a general robust adaptive control algorithm. In particular, the algorithm allows us to impose communication and delay constraints on the controller implementation and is formulated as a sequence of robust optimization problems that can be solved in a distributed manner. The proposed control methodology performs particularly well when the interconnection between systems is sparse and the dynamics of local regions of subsystems depend only on a small number of parameters. As we will show on a five-dimensional exemplary chain-system, the algorithm can utilize system structure to efficiently learn and control the entire system while respecting communication and implementation constraints. Moreover, although current theoretical results require the assumption of small initial uncertainties to guarantee robustness, we will present simulations that show good closed-loop performance even in the case of large uncertainties, which suggests that this assumption is not critical for the presented technique and future work will focus on providing less conservative guarantees.

\end{abstract}


\section{INTRODUCTION}
\begin{lv}
\begin{col}
The question of how to control systems, for which one does not have an accurate model, has been of central interest to the control community for many decades with some early work dating back as early as 1950 by Kalman and Bellman. Roughly speaking, the control methodologies developed to tackle this problem can be classified in two, partially overlapping groups: robust control and adaptive control. In robust control, one aims to design a controller that can guarantee a certain performance for all possible systems within an uncertainty class. Whereas adaptive control parametrizes the system uncertainties and studies how to estimate the corresponding parameters and adapt the feedback controller online. Finally, merging the former two frameworks is the focus of robust adaptive control. For a more detailed overview of adaptive control, we refer to the survey \cite{tao2014multivariable} and \cite{ioannou1996robust}, as we can not review the rich history of this subject here.
\end{col}
\end{lv}

\noindent With the recent explosion of available computational resources and progress in the field of learning and estimation theory, there has been a resurging interest in robust adaptive control in the control and also machine learning community. In contrast to traditional work in \cite{tao2014multivariable} and \cite{ioannou1996robust}, recent work has focused on analysis and development of adaptive control algorithms that merge learning and statistical theory techniques \cite{dean2017sample}, \cite{dean2018regret}, \cite{abbasi2018regret}. Although adaptive control algorithms are very useful for many systems of large-scale like communication networks, traffic networks or the power grid, there has not been a general theory of how to address the challenges in that setting. One of the major difficulties with deploying scalable adaptive algorithms in systems of that scale is, that the controller has to respect real-world implementation and communication constraints. Even in the non-adaptive case, incorporating these constraints into the control design is a challenging problem. Nevertheless, recent progress has been made by taking a new \textit{System Level} approach \cite{slsacc},\cite{wang2014localized}, \cite{virtloc}, that allows to incorporate such constraints into optimal control problems in a tractable way. Aside from that, recent work  \cite{matni2017scalable}, \cite{dean2017sample} has shown that the ideas in \cite{virtloc} can be used to provide robustness results that help to combine learning and control techniques with stability guarantees.\\

\noindent In this work, we will leverage the system level approach to formulate a new general framework for robust adaptive control in large-scale systems. In particular, we will study the problem for linear systems with bounded uncertainty and disturbances. An appeal of this problem formulation is that in contrast to probabilistic guarantees as formulated in the results of \cite{dean2018regret}, \cite{abbasi2018regret}, we are able to provide worst-case safety guarantees that apply even in the presence of adversarial disturbances and small model non-linearities. 
 Overall, the contribution of this paper is two-fold: We will derive robustness criteria similar to \cite{virtloc} for time-varying systems and controllers that provide a new general way to design stable adaptation in controllers. Secondly, we utilize these results to develop a robust and adaptive control scheme that can respect imposed communication and implementation constraints on the controller and allows for a distributed scalable implementation in large scale systems.  Although our current stability proof is formulated for small initial uncertainties, in simulation we will show that the resulting control algorithm performs well even when the initial parameter uncertainties are large and the open loop system is unstable.

\section{A Motivational Example: A 5-link chain system}\label{sec:introex}
\noindent We will begin by introducing the example which we use for our simulation results, to motivate the problem statement and the techniques presented in this work. \\

\noindent Consider the problem of controlling the following $5$-link chain-system  \eqref{eq:5chainsys} with the state $x_t \in \R{5}$, input $u_t \in \R{2}$, disturbance $w_t \in \R{5}$ and full state measurement $y_k = x_k$:
\begin{align}\label{eq:5chainsys}
x_{t} &= A x_{t-1} + B u_{t-1} + w_{t-1} \\
y_{t} &= x_t\nonumber
\end{align}
Furthermore, assume that we do not have exact knowledge of $A$ and $B$, but rather we do know that the system matrices $A$ and $B$ are structured as
\begin{align}\label{eq:ABstruct}
A = \begin{bmatrix} \alpha_2 &\alpha_3&0&0&0\\ 
  \alpha_1& \alpha_2 &\alpha_3&0&0\\
0&  \alpha_1& \alpha_2 &\alpha_3&0\\ 
0&0&  \alpha_1& \alpha_2 &\alpha_3\\
0&0&0&  \alpha_1& \alpha_2 \end{bmatrix} & B = \begin{bmatrix} \alpha_4 & 0\\ 
 0& 0\\
0&0\\ 
0&0\\
0&\alpha_5 \end{bmatrix} 
\end{align}
and that the parameters $\alpha$ lie within the bounds $ 0 \leq \alpha_2 \leq 1$, $ 0.1 \leq \alpha_1,\alpha_3 \leq 0.5$, $0.2 \leq \alpha_4 \leq 1$ and $-1 \leq \alpha_5 \leq -0.2$. In addition, assume that we know that the disturbance is bounded as $\linf{w} \leq 0.5$.\\ 

\noindent We are interested in the problem of stable learning and control of this system under communication and computation constraints on the controller implementation. In particular, we will assume that \eqref{eq:5chainsys} models a system of five interconnected, but otherwise separately acting scalar subsystems with state $x^i_t$ and input $u^i_t$ (where $u^i_t = 0$ for $i = 2,3,4$), that can communicate with eachother with a delay of $|i-j|$ time steps and each have limited computational power. 
 
  Although this example is of small size, this problem setup captures the main difficulties that come with solving this type of robust adaptive control problem for large-scale systems, which will be the focus of the remainder of this paper.




\section{Problem Statement}\label{sec:prob}
\noindent Consider a linear system of $N$ subsystems with states, inputs and disturbances $x^j$, $u^j$, $w^j$ that are interconnected w.r.t. the directed graph $\c{G} = (\c{V},\c{E})$, i.e. $\c{V}= \left\{1,\dots,N \right\}$ and $(j,i) \in \c{E}$ implies that $x^{i}_t$ influences $x^{j}_{t+1}$. Furthermore, define $\c{N}(j)$ to be the set of subsystems that affect the subsystem $j$ in the next time step, i.e. $\c{N}(j) = \left\{i\left|(j,i) \in \c{E} \right. \right\}$. The dynamics of the entire system can be written in the form
\begin{align}\label{eq:sysdist}
x^{j}_{t+1} & = \sum\limits_{i \in \c{N}(j)}A^{\up{j}{i}} x^i_t + B^j u^j_t + w^j_t.
\end{align}
and we will refer to $x_t = \left[x^{1}_t,x^{2}_t,\dots,x^{N}_t\right]^T$ and $u_t = \left[u^{1}_t,u^{2}_t,\dots,u^{N}_t\right]^T$ as the global state and input of the system and accordingly, we will refer to $A$ and $B$ as the global system matrices, which are the corresponding compositions of the matrices $A^{\up{j}{i}}$ and $B^j$.
\begin{rem}
We allow for loops in the graph $\c{G}$, which implies $j \in \c{N}(j)$.
\end{rem}
\noindent Similar to our introductory example in \secref{sec:introex}, we will assume that the matrices $A^{\up{j}{i}}$ and $B^j$ are structured and have a low-dimensional representation of the form \eqref{eq:AvuBu} w.r.t. some uncertain parameters $\alpha \in \R{p}$ and known constant matrices $\c{A}^{\up{v}{u}}_s$, $\c{B}^{u}_s$.
\begin{align}\label{eq:AvuBu}
 &A^{\up{v}{u}} = \sum\limits^{p}_{s=1} \alpha_s \c{A}^{\up{v}{u}}_s & B^u = \sum\limits^{p}_{s=1} \alpha_s\c{B}^{u}_s
 \end{align}
 Furthermore, assume we are given the following information about the parameter $\alpha$ and the disturbance $w^j$ in each subsystem:
 \begin{align}\label{eq:previnfo}
& \alpha \in \c{P}_0  & \left\|w^j_t\right\| \leq \eta \quad \forall t\geq 0
 \end{align}
Moreover we can setup the problem with any norm $\left\|.\right\|$, but for technical reasons, we will make the following assumption:
 \begin{asp}\label{asp:norm}
The unit ball $\left\{x \left| \left\|x\right\| \leq 1 \right. \right\}$ of the norm is a polytope. 
 \end{asp}
 \begin{rem}
 Common examples that satisfy \assref{asp:norm} are $\left\|.\right\|_{1}$ and $\left\|.\right\|_{\infty}$.
 \end{rem}

\subsection {Main Goal} 
\noindent Our objective will be to design causal controllers $u^{j}_k(x_k,x_{k-1},\dots, x_0)$ that stabilize the global system despite the model uncertainties and allow for a scalable controller implementation when the total number of subsystems is very large. To make the second requirement more precise, we will break it down into the following three constraints:
 
 
\begin{con}[Communication]\label{con:delayintro}
Every subsystem $j$ can communicate with another subsystem $i$ with a delay of $d^{\up{j}{i}}$ time-steps.
\end{con}
 
\begin{con}[Localized Communication]\label{con:locintro}
Every subsystem $i$ only sends information to a local region of subsystems $\c{S}(i)$.
\end{con}
\noindent Corresponding to \conref{con:locintro}, let's define $\c{R}(i)$ to be the set of subsystems from which subsystem $i$ receives information:
\begin{defn}\label{def:Ri}
 $\c{R}(i) := \left\{j \left| i \in \c{S}(j)\right. \right\}$
\end{defn}
\begin{con}[Limited Computation]\label{con:loccompintro}
Every subsystem $j$ has limited computational resources.
\end{con}

\section{Outline of the Approach}
\noindent We will briefly motivate our chosen control architecture and provide an overview of the results.
\subsection{Ansatz: Time-Varying Controllers in SLS Implementation}

\noindent The recent SLS approach \cite{slsacc} develops an useful insight for linear time-invariant systems: An equivalence relationship between the closed loop maps from disturbance to state/control action and the corresponding realizing controller. In particular, assume we are controlling system \eqref{eq:systi} with a time-invariant controller $u$
\begin{align}\label{eq:systi}
x_{k+1} = Ax_k + Bu(x_k,x_{k-1},\dots) + w_k
\end{align}
that guarantees $x_{T+1} = 0$ for any initial condition and $w_k=0$. Then the corresponding closed loop map $w_k \rightarrow x_k$ and $w_k \rightarrow u_k$ can be written in the form
\begin{align}\label{eq:clmap}
&x_t  = \sum\limits^{T}_{k=1} R(k)w_{t-k} &u_t  = \sum\limits^{T}_{k=1} M(k)w_{t-k}
\end{align} 
for some matrices $R(k)$, $R(1) = I$ and $M(k)$. Then, as proven in \cite{slsacc}, the realizing controller $u(x_k,x_{k-1},\dots)$ can equivalently be implemented through the equations
\begin{align*}
\hat{\delta}_t &= x_t-\sum^{T-1}_{k=1} R(k+1)\hat{\delta}_{t-k}\\
u_t &= \sum^{T-1}_{k=0} M(k+1)\hat{\delta}_{t-k}
\end{align*}
where $\hat{\delta}_t$ is the internal state of the controller. In addition, as shown in \cite{virtloc}, this controller implementation provides stable closed loop systems even when $R$ and $M$ satisfy the relation \eqref{eq:clmap} only approximately. With this motivation we will structure our controller to be in \text{SLS implementation} and take the form:
\begin{align}\label{eq:usls}
u_t &= \sum^{T-1}_{k=0} M_t(k+1)\hat{\delta}_{t-k}\\
\label{eq:dsls} \hat{\delta}_t &= y_t-\sum^{T-1}_{k=1} R_{t}(k+1)\hat{\delta}_{t-k}.
\end{align}
where $M_t(i) \in \R{p\times n}$, $R_t(i)\in \R{n\times n}$, $\forall 1 \leq i\leq T$ and $R(1) = I_n$. \\

\noindent Our robust adaptive control scheme will propose algorithms that continuously use state observations to update the $R_t$ and $M_t$ matrices in a stable and performance improving manner. More specifically, this happens in two steps: First, we utilize the system equations \eqref{eq:sysdist} and the disturbance assumption \eqref{eq:previnfo} to continuously compute polytopes of possible parameters $\alpha$ from observations. This parameter information is then used to stably adapt the matrices $R_t$ and $M_t$ by utilizing a new result on robustness of time-varying controllers in SLS implementation form.\\

\noindent In the following sections, we will elaborate upon these ideas. In the next two sections, we will show how to compute parameter polytopes from observations that continuously reduce the uncertainty in $\alpha$ and we will derive a robustness result for time-varying controllers of the form \eqref{eq:usls}, \eqref{eq:dsls}. Afterwards we will discuss how these ideas can be combined to formulate a robust control scheme that allows for scalable implementation in large-scale systems and follow up with simulation results of the exemplary system introduced in \secref{sec:introex}.

\section{Reducing Uncertainty through Polytopes of Consistent Parameters}\label{sec:poly}
\noindent Recall from \eqref{eq:AvuBu}, that $A^{\up{v}{u}}$ and $B^u$ are structured as
\begin{align}
 &A^{\up{v}{u}} = \sum\limits^{p}_{s=1} \alpha_s \c{A}^{\up{v}{u}}_s & B^u = \sum\limits^{p}_{s=1} \alpha_s\c{B}^{u}_s
 \end{align}
 with $\alpha \in \Pt{0}$. Given a pair of observation $x_{k}, x_{k-1}$ and control action $u_{k-1}$, the disturbance bound $\left\| w^j\right\| \leq \eta$ informs us about $\alpha$, since the true $\alpha$ has to be consistent with the dynamics \eqref{eq:sysdist} and therefore has to satisfy the following inequality:
\begin{align}\label{eq:s1}
\left\| x^j_{k}-\sum^{p}_{s=1}\alpha_s \left( \sum\limits_{i\in \c{N}(j)} \c{A}^{\up{j}{i}}_s x^{i}_{t-1} + \c{B}^{j}_{s} u^{j}_{t-1} \right) \right\|\leq \eta
\end{align}
or equivalently
\begin{align}\label{eq:s2}
\left\| x^j_{k}-\sum^{p}_{s=1}\alpha_s \hat{y}^{j}_{s,k-1} \right\|\leq \eta\\
\label{eq:ys}\hat{y}^{j}_{s,k-1}  = \sum\limits_{i\in \c{N}(j)} \c{A}^{\up{j}{i}}_s x^{i}_{t-1} + \c{B}^{j}_{s} u^{j}_{t-1}
\end{align}
Since we assumed in \assref{asp:norm} that the norm ball of $\left\|.\right\|$ is a polytope, it is easy to see that condition \eqref{eq:s2} poses a polyhedral constraint on the system parameters $\alpha$. We will define these inferred constraints from observations made in subsystem $j$ at time $t$ as $\c{C}^j_t$:
\begin{align}\label{eq:Ctcons}
\c{C}^j_t = \left\{\alpha\left|\left\| x^j_{k}-\sum^{p}_{s=1}\alpha_s \hat{y}^j_{s,k-1} \right\|\leq \eta \right. \right\}
\end{align}
By intersecting all constraints of the form \eqref{eq:Ctcons}, we can define $\Pt{t}$ as the polytope of parameters consistent with the observations until time $t$:
\begin{align}\label{def:Ptj}
\notag &\Pt{t} \\
=& \left\{\left.\alpha \in \Pt{0}\right| \forall j, \forall k \leq t: \left\| x^j_{k}-\sum^{p}_{s=1}\alpha_s \hat{y}^j_{s,k-1} \right\|\leq \eta\right\}\\
=&\Pt{0} \cap \bigcap\limits^{N}_{j=1} \left( \c{C}^j_1 \cap \c{C}^j_2 \cap \dots \cap \c{C}^j_t\right)
\end{align}
Correspondingly define $\c{M}^{\up{j}{i}}_A\left( \Pt{t}\right)$  and $\c{M}^{j}_B\left( \Pt{t}\right)$ to be the set of consistent system matrices $A^{\up{j}{i}}$ and $B^j$  at time $t$:
\begin{align}
\label{eq:MofPA} \c{M}^{\up{j}{i}}_A\left( \Pt{t}\right) &= \left\{\left.\sum^{p}_{s = 1} \alpha_s \c{A}^{\up{j}{i}}_s \right| \alpha \in \Pt{t} \right\} \\
\label{eq:MofPB}\c{M}^{j}_B\left( \Pt{t}\right) &= \left\{\left.\sum^{p}_{s = 1} \alpha_s \c{B}^{j}_s \right| \alpha \in \Pt{t} \right\}
\end{align}

\noindent Furthermore, allowing every subsystem to share their observed constraints while respecting \conref{con:delayintro} and  \conref{con:locintro}, we can define $\Pt{t}^j$ as the polytope of consistent parameters for subsystem $j$ at time $t$ as:
\begin{align}\label{eq:Pi_t}
&\c{P}^j_t = \c{P}^j_{t-1} \bigcap_{i \in \c{R}(j)} \c{\hat{C}}^{\up{j}{i}}_t &\c{\hat{C}}^{\up{j}{i}}_t = \c{C}^i_{t-d_{\up{j}{i}}}
\end{align} 
where $\c{\hat{C}}^{\up{j}{i}}_t$ denotes the constraints that $j$ has obtained from system $i$ at time $t$ and $\c{R}(j)$ is defined in \defref{def:Ri}.

\section{Robustness for Time-Varying SLS Implementations}\label{sec:slst}
\noindent In this section we will develop a robustness result which directly informs us how to perform stable adaptation of $R_t$ and $M_t$.\\ For the sake of completeness, let us consider the general time-varying linear system 
\begin{align}\label{eq:syst}
x_{t} &= A_{t-1} x_{t-1} + B_{t-1} u_{t-1} + w_{t-1}\\
y_t &= x_t + v_t
\end{align}
with $u$ being a controller of the form \eqref{eq:usls} and \eqref{eq:dsls}. The internal state $\hat{\delta}$ can be understood as an estimate of the \textit{effective disturbance} in the system at time $t$ and by rewriting \eqref{eq:dsls} as 
\begin{align}\label{eq:xdef}
x_t = \sum^{T-1}_{k=0} R_{t}(k+1)\hat{\delta}_{t-k} - v_t
\end{align}
we see that $x_t$ is bounded if and only if this effective disturbance $\hat{\delta}$ is bounded. Using \eqref{eq:syst}, we can conclude that the dynamics of $\hat{\delta}$ satisfy the equations:
\begin{align}
\hat{\delta}_{t} &= x_t+v_t-\sum^{T-1}_{k=1} R_{t}(k+1)\hat{\delta}_{t-k}\\
&=A_{t-1}\left(\sum^{T}_{k=1} R_{t-1}(k)\hat{\delta}_{t-k} - v_{t-1}\right) \dots \nonumber \\
&\quad + B_{t-1}\left(\sum^{T}_{k=1} M_{t-1}(k)\hat{\delta}_{t-k}\right)\dots \nonumber\\
&\quad -\sum^{T-1}_{k=1} R_{t}(k+1)\hat{\delta}_{t-k} + v_t + w_{t-1}. 
\end{align}
Introduce the functions $\Delta_1$, $\Delta_2$, $\dots$, $\Delta_T$ as
\begin{align}
\notag\Delta_k\left(A,B,R,M\right) &= R(k+1)-AR(k)-BM(k)\\
\label{eq:Ddef}\Delta_T\left(A,B,R,M\right) &= -AR(T)-BM(T)
\end{align}
then we can write the dynamics for $\hat{\delta}$ in the form:
\begin{align}\label{eq:ddyn}
\notag\hat{\delta}_t&= -\sum^{T}_{k=1} \Delta_k\left(A_{t-1},B_{t-1},R_{t-1},M_{t-1}\right)\hat{\delta}_{t-k} \\
\notag& \quad \dots + \sum^{T-1}_{k=1} \left(R_{t-1} -R_{t}\right)(k+1)\hat{\delta}_{t-k} \\
&\quad \quad \dots + \left(v_t -A_{t-1}v_{t-1}\right) + w_{t-1}.
\end{align}
It is now clear that making the closed loop system stable is equivalent to keeping the dynamics of the effective disturbance in \eqref{eq:ddyn} bounded.\\

\noindent To this end, the following stability result for scalar systems will be useful: 

\begin{lem}\label{lem:scalar}
Consider the positive scalar sequence $z_t$ that satisfies
\begin{align}\label{eq:zcond}
&z_t \leq \lambda \max\limits_{1 \leq k\leq t} z_{t-k} + \eta &\text{ for } 1\leq t\leq T\\   
 &z_t \leq \lambda \max\limits_{1 \leq k\leq T} z_{t-k} + \eta &\text{ for } t > T
\end{align}
with  $0< \lambda $, then the following bound holds true $\forall t \geq 0$:
\begin{align}
\label{eq:zboundstab}& z_t \leq \left(\sqrt[T]{\lambda}\right)^t z_{0}+\frac{1-\lambda^t}{1-\lambda}\eta & \text{for }\lambda <1\\
\label{eq:zboundtriv} & z_t \leq \lambda^t z_{0}+\frac{1-\lambda^t}{1-\lambda}\eta & \text{for }\lambda \geq 1
\end{align}
\end{lem}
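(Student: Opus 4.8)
The plan is to prove both inequalities by induction on $t$, after rewriting the two right-hand sides in a single unified form. Setting $c_t = \frac{1-\lambda^t}{1-\lambda} = 1 + \lambda + \dots + \lambda^{t-1}$ for the forced term and $\gamma = \sqrt[T]{\lambda}$ when $\lambda < 1$ (respectively $\gamma = \lambda$ when $\lambda \geq 1$) for the per-step factor, both claims read $z_t \leq \gamma^t z_0 + c_t\,\eta$. The base case $t = 0$ is immediate since $c_0 = 0$. The inductive step feeds the hypothesis $z_{t-k} \leq \gamma^{t-k} z_0 + c_{t-k}\eta$ into the recursion \eqref{eq:zcond}, and the whole argument reduces to locating the maximum of the candidate bound over the look-back window $1 \leq k \leq \min(t,T)$.

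The easy regime is $\lambda \geq 1$. Here both $s \mapsto \gamma^s = \lambda^s$ and $s \mapsto c_s$ are nondecreasing, so the function $f(s) := \lambda^s z_0 + c_s\eta$ is nondecreasing in $s$; hence its window maximum is always attained at the most recent time $k=1$, which lies in the window for every $t \geq 1$ regardless of whether $t \leq T$ or $t > T$. Substituting $z_{t-k} \leq f(t-k) \leq f(t-1)$ into \eqref{eq:zcond} and using the identity $\lambda c_{t-1} + 1 = c_t$ then closes the induction in a single line, yielding \eqref{eq:zboundtriv}.

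The genuine obstacle is the case $\lambda < 1$, because the two parts of the bound now pull in opposite directions: the homogeneous term $\gamma^s z_0$ is decreasing in $s$ while the forced term $c_s \eta$ is increasing, so the window maximum of $f$ is no longer attained at a single endpoint. The key device to get around this is to bound the maximum termwise via $\max_k (a_{t-k} + b_{t-k}) \leq \max_k a_{t-k} + \max_k b_{t-k}$; equivalently, I would decompose $z_t \leq a_t + b_t$ into a homogeneous sequence $a_t = \lambda \max_k a_{t-k}$ with $a_0 = z_0$ and a forced sequence $b_t = \lambda \max_k b_{t-k} + \eta$ with $b_0 = 0$, the split being legitimate precisely by subadditivity of the $\max$. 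For the forced part one checks that $b_t$ is monotonically increasing, so its window maximum sits at $k=1$ and the recursion collapses to $b_t = \lambda b_{t-1} + \eta$, giving $b_t = c_t \eta$ exactly. For the homogeneous part $a_t$ is decreasing, so its window maximum sits at the oldest index $k = \min(t,T)$: for $1 \leq t \leq T$ this yields $a_t = \lambda z_0 \leq \lambda^{t/T} z_0$ (using $t/T \leq 1$), while for $t > T$ the inductive hypothesis gives $a_t \leq \lambda\,\gamma^{t-T} z_0 = \gamma^T \gamma^{t-T} z_0 = \gamma^t z_0$, where the exponents cancel exactly because $\gamma^T = \lambda$.

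Assembling the two estimates produces $z_t \leq \gamma^t z_0 + c_t \eta$, which is the claimed \eqref{eq:zboundstab}. The one point deserving care throughout is the transition at $t = T$, where the window stops reaching back to $z_0$ and is truncated to length $T$; I would verify that the monotonicity arguments locating the window maxima survive this boundary, the decisive inequality being $\lambda \leq \lambda^{t/T}$ for $t \leq T$, which controls the initial segment of the homogeneous part.
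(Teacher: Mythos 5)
Your proof is correct, but it takes a genuinely different route from the paper's. The paper works directly with the coupled bound: starting from the target time $t'$ it constructs a backward chain of window-argmax indices $t' = t_N > t_{N-1} > \dots > t_0 = 0$, along which the recursion collapses to the scalar relation $z_{t_j} \le \lambda z_{t_{j-1}} + \eta$, yielding $z_{t'} \le \lambda^N z_0 + \tfrac{1-\lambda^N}{1-\lambda}\eta$; it then observes that the chain length satisfies $t'/T \le N \le t'$ and that the first summand is decreasing in $N$ while the second is increasing, so each can be bounded using the appropriate extreme of $N$. You instead split at the outset, using subadditivity of the max to dominate $z_t$ by a homogeneous sequence $a_t$ (nonincreasing, window max at the oldest index, giving the $\lambda^{t/T}$ decay since the window reaches back at most $T$ steps) plus a forced sequence $b_t$ (nondecreasing, window max at $k=1$, collapsing to the geometric sum). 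The two arguments resolve the same tension --- the homogeneous term is worst when the look-back gaps are long, the forced term when they are short --- but your superposition organizes it as two independent monotone recursions rather than one chain with a two-sided count, which makes the $\lambda \ge 1$ case and the boundary at $t=T$ essentially automatic; the paper's chain argument is shorter once set up but requires the slightly delicate bookkeeping that the backward construction terminates at $0$ and that every gap is between $1$ and $T$. Both deliver exactly the bounds \eqref{eq:zboundstab} and \eqref{eq:zboundtriv}.
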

\begin{proof}
We will only derive \eqref{eq:zboundstab}, since \eqref{eq:zboundtriv} follows trivially. 
Pick any trajectory that satisfies \eqref{eq:zcond} and fix $t'$. By using the relationship \eqref{eq:zcond}, we will construct a subsequence $z_{t_j}$, with $j=0,\dots,N$, $t_0 = 0$, $t_N = t'$ such that 
\begin{align}\label{eq:t_rec}
z_{t_{n-1}} &= \max\limits_{\min\left\{t_{n}-T, 0\right\} \leq \tau \leq t_{n}-1} z_{\tau}.
\end{align}
Starting from $t_N = t'$ and proceeding backwards according \eqref{eq:t_rec}, it is clear that this  construction is always possible for some $N$ and by examining the recursive relation \eqref{eq:t_rec}, we can bound $N$ by 
\begin{align}\label{eq:Nbnd}
t'/T \leq N \leq t'
\end{align}  
The constructed subsequence satisfies for all $1\leq j\leq N$ the relation
\begin{align}
z_{t_j} \leq \lambda  z_{t_{j-1}}+\eta.
\end{align} 
Per induction we can derive the following bound for $z_{t'} = z_{t_N}$:
\begin{align}\label{eq:gamN}
z_{t'} = z_{t_N} \leq \lambda^Nz_{t_0} + \frac{1-\lambda^N}{1-\lambda}\eta =\lambda^Nz_{0} + \frac{1-\lambda^N}{1-\lambda}\eta
\end{align}
In terms of $N$, the first summand in \eqref{eq:gamN} is monotonically decreasing, while the second one is monotonically increasing. Therefore, using \eqref{eq:Nbnd}, we can further bound $z_{t'}$ by
\begin{align}\label{eq:gambnd}
z_{t'}  \leq \left(\sqrt[T]{\lambda}\right)^{t'}z_{0} + \frac{1-\lambda^{t'}}{1-\lambda}\eta. 
\end{align}
Since the time-step $t'$ was chosen arbitrarily, \eqref{eq:gambnd} establishes the desired result.
\end{proof}

\noindent For any norm $\left\|x\right\|$, let $\left\|A\right\|$ be the corresponding induced norm for matrices $A$. 
We will use the abbreviation $\hat{w}_t := \left(v_t -A_{t-1}v_{t-1}\right) + w_{t-1}$ and $\left\|\hat{w}_t \right\| \leq \hat{\eta}$ and obtain from \eqref{eq:ddyn} the bound
\begin{align}
   \left\|\hat{\delta}_t\right\| &\leq \sum^{T}_{k=1} \left\|\Delta_k\left(A,B,R,M\right)_{t-1}\right\| \left\|\hat{\delta}_{t-k}\right\| + \dots \\
&\quad\left\|\sum^{T-1}_{k=1} \left(R_{t-1} -R_{t}\right)(k+1)\hat{\delta}_{t-k}\right\| + \hat{\eta}.
\end{align}
and furthermore
\begin{align}
   \left\|\hat{\delta}_t\right\| &\leq \left(\sum^{T-1}_{k=1} \left\|\Delta_k\left(A,B,R,M\right)_{t-1}\right\|\right) \max\limits_{1\leq k\leq T}\left\|\hat{\delta}_{t-k}\right\| + \dots \\
&\quad\left\|\sum^{T-1}_{k=1} \left(R_{t-1} -R_{t}\right)(k+1)\hat{\delta}_{t-k}\right\| + \hat{\eta}.
\end{align}

The following robustness result merely follows as a corollary of the previous Lemma \lemref{lem:scalar}:
\begin{thm}\label{thm:slst}
Assume that $A_t$, $B_t$, $M_t$, $R_t$ are bounded matrix sequences. Consider the system \eqref{eq:syst} with the controller \eqref{eq:usls}, \eqref{eq:dsls} and with $A_t$, $B_t$, $M_t$, $R_t$ being bounded matrix sequences. If for all $t$ holds 
\begin{align}\label{eq:thmasum}
\quad \sum^{T}_{k=1} \left\|\Delta_k\left(A_t,B_t,R_t,M_t\right)\right\|\leq \lambda 
\end{align} and 
\begin{align}\label{eq:adaprule}
    \left\|\sum^{T-1}_{k=1} \left(R_{t-1} -R_{t}\right)(k+1)\hat{\delta}_{t-k}\right\| \leq m_a
\end{align}
 with $\Delta_t$ defined as in \eqref{eq:Ddef}, then the bound $\left\|\hat{\delta}_t\right\| \leq \gamma_t$ is guaranteed, where 
\begin{align}
\label{eq:ineqstab}&\gamma_t = \left(\sqrt[T]{\lambda}\right)^t \left\|x_0\right\| + \frac{1-\lambda^t}{1-\lambda}\left(\hat{\eta}  +m_a\right) &\text{if }\lambda <1 \\
\label{eq:inequnstab}&\gamma_t = \lambda^t \left\|x_0\right\| + \frac{1-\lambda^t}{1-\lambda}\left(\hat{\eta} +m_a\right) &\text{if }\lambda\geq 1
\end{align}
and $x_t$ and $u_t$ are bounded by
\begin{align}
\left\|x_t \right\| &\leq \sum^{T-1}_{k=1} \left\|R_{t}(k)\right\|\max_{\stackrel{t-T \leq i \leq t-1}{i\geq 0}}\gamma_{i} \\
\left\|u_t \right\| & \leq \sum^{T-1}_{k=1} \left\|M_{t}(k)\right\|\max_{\stackrel{t-T \leq i \leq t-1}{i\geq 0}}\gamma_{i}
\end{align}
\end{thm}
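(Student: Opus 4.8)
The plan is to reduce the vector recursion \eqref{eq:ddyn} to the scalar setting of \lemref{lem:scalar} by tracking the scalar sequence $z_t := \left\|\hat{\delta}_t\right\|$. First I would apply the triangle inequality together with submultiplicativity of the induced norm term-by-term to \eqref{eq:ddyn}. The three groups of terms are then handled separately: the $\Delta_k$ block is bounded by pulling out $\max_{1\leq k\leq T}\left\|\hat{\delta}_{t-k}\right\|$ and invoking the summability hypothesis \eqref{eq:thmasum} to replace the coefficient sum by $\lambda$; the adaptation block is bounded directly by $m_a$ through \eqref{eq:adaprule}; and the driving term $\hat{w}_t = (v_t - A_{t-1}v_{t-1}) + w_{t-1}$ is bounded by $\hat{\eta}$ by definition. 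Combining these yields the scalar inequality $z_t \leq \lambda \max_{1\leq k\leq T} z_{t-k} + (\hat{\eta}+m_a)$, which is exactly the hypothesis of \lemref{lem:scalar} with the constant $\eta$ replaced by $\hat{\eta}+m_a$.

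Next I would check the two time regimes required by \lemref{lem:scalar}. For $1\leq t\leq T$ fewer than $T$ past samples are available, so the maximum runs over $1\leq k\leq t$; for $t>T$ it runs over the full window $1\leq k\leq T$. Both cases are produced automatically by the same term-by-term bound once the convention $\hat{\delta}_{-k}=0$ for $k\geq 1$ is adopted, and this convention also fixes the initial value: at $t=0$ equation \eqref{eq:dsls} collapses to $\hat{\delta}_0 = y_0 = x_0$ (taking $v_0=0$), so $z_0 = \left\|x_0\right\|$. With the hypotheses verified, \lemref{lem:scalar} delivers $\left\|\hat{\delta}_t\right\| \leq \gamma_t$ with $\gamma_t$ as in \eqref{eq:ineqstab} when $\lambda<1$ and \eqref{eq:inequnstab} when $\lambda\geq 1$, after substituting $z_0=\left\|x_0\right\|$.

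Finally I would propagate the bound on $\hat{\delta}$ to the physical signals. Applying the triangle inequality to the reconstruction \eqref{eq:xdef} and to the control law \eqref{eq:usls}, then bounding each $\left\|\hat{\delta}_{t-k}\right\|$ by the relevant $\gamma_{t-k}$ and factoring out the worst case over the window $t-T\leq i\leq t-1$, $i\geq 0$, gives the stated bounds on $\left\|x_t\right\|$ and $\left\|u_t\right\|$.

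I expect the only genuine obstacle to be bookkeeping rather than substance: matching the summation indices between \eqref{eq:ddyn} and the coefficient sum in \eqref{eq:thmasum} (the displayed intermediate bound mixes the $T$ and $T-1$ limits), and correctly aligning the windowed maximum in the two regimes with the definition of $\gamma_t$ and with the initialization. Since each term in \eqref{eq:ddyn} maps cleanly onto one ingredient of \lemref{lem:scalar}, the result is indeed a corollary and no new estimate beyond the scalar lemma is needed.
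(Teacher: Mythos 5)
Your proposal is correct and follows essentially the same route as the paper: the paper derives the recursion \eqref{eq:ddyn} for $\hat{\delta}_t$, applies the triangle inequality to reduce it to the scalar recursion $z_t \leq \lambda \max_{1\leq k\leq T} z_{t-k} + (\hat{\eta}+m_a)$, and then cites \lemref{lem:scalar} together with \eqref{eq:usls} and \eqref{eq:xdef} for the bounds on $u_t$ and $x_t$ --- exactly your three steps. Your write-up is in fact more careful than the paper's one-line proof, since you make explicit the initialization $z_0=\left\|x_0\right\|$, the convention $\hat{\delta}_{-k}=0$, and the $T$ versus $T-1$ index bookkeeping that the paper glosses over.
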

\begin{proof}
The proof follows by \lemref{lem:scalar} and the relationship \eqref{eq:usls} and \eqref{eq:xdef}. 
\end{proof}
\noindent We will call $\lambda$ for which the conditions in \thmref{thm:slst} are satisfied, a \textit{robustness margin} of the closed loop. Trivially, if such $\lambda$ is smaller than $1$, \thmref{thm:slst} proves stability of the closed loop:
\begin{coro}
The closed loop is stable, if the conditions in \thmref{thm:slst} are satisfied for a $\lambda <1$. 
\end{coro}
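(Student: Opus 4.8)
The plan is to read the statement off directly from \thmref{thm:slst}, which has already done the substantive work: the corollary amounts to inspecting the explicit bound \eqref{eq:ineqstab} in the regime $\lambda < 1$ and checking that it forces uniform boundedness of every signal in the loop. First I would invoke the theorem: under the robustness condition \eqref{eq:thmasum} and the adaptation bound \eqref{eq:adaprule} with $\lambda < 1$ (and the standing assumption that $A_t, B_t, R_t, M_t$ are bounded matrix sequences), the theorem already yields $\|\hat{\delta}_t\| \le \gamma_t$ with $\gamma_t$ given by \eqref{eq:ineqstab}. So the only thing to establish is that $\gamma_t$, and hence the physical signals, stay bounded for all $t$.

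Next I would argue that $\gamma_t$ is uniformly bounded in $t$. The first term $(\sqrt[T]{\lambda})^t \|x_0\|$ is a decaying geometric, since $\lambda < 1$ forces $\sqrt[T]{\lambda} < 1$; it is therefore bounded by $\|x_0\|$ and in fact tends to zero. The second term $\tfrac{1-\lambda^t}{1-\lambda}(\hat{\eta}+m_a)$ is monotone increasing in $t$ but bounded above by $\tfrac{\hat{\eta}+m_a}{1-\lambda}$. Hence $\gamma_t \le \|x_0\| + \tfrac{\hat{\eta}+m_a}{1-\lambda}$ for all $t$, and moreover $\limsup_{t\to\infty}\gamma_t \le \tfrac{\hat{\eta}+m_a}{1-\lambda}$, which is the expected input-to-state (ultimate) bound driven purely by the effective disturbance level $\hat\eta$ and the adaptation budget $m_a$.

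Then I would transfer this estimate from the controller's internal state $\hat\delta$ to the physical signals. The theorem's state and input bounds express $\|x_t\|$ and $\|u_t\|$ as finite sums of $\|R_t(k)\|$ (respectively $\|M_t(k)\|$) against $\max_i \gamma_i$ taken over a length-$T$ window. Because $R_t$ and $M_t$ are assumed to be bounded matrix sequences and each window contains only finitely many ($T-1$) terms, the sums $\sum_k \|R_t(k)\|$ and $\sum_k \|M_t(k)\|$ admit uniform bounds in $t$; multiplying these by the uniform bound on $\gamma_i$ from the previous step yields uniform bounds on $\|x_t\|$ and $\|u_t\|$. That is precisely boundedness of the closed loop.

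The statement is deliberately terse, so the only genuine subtlety — and the step I would flag — is fixing the correct notion of \emph{stability}. Since the disturbance bound $\hat\eta$ is persistent, one cannot expect asymptotic convergence to the origin; the right reading is bounded-input/bounded-state, equivalently the input-to-state-style estimate established above. Once that interpretation is agreed, no real obstacle remains: the result is an immediate corollary of \thmref{thm:slst}, using only the elementary facts that $\lambda < 1$ implies $\sqrt[T]{\lambda} < 1$ and that a geometric series with ratio $\lambda < 1$ converges.
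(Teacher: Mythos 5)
Your argument is correct and follows exactly the route the paper intends: the paper treats this corollary as immediate from \thmref{thm:slst} (``trivially, if such $\lambda$ is smaller than $1$\dots''), and your write-up simply fills in the elementary details --- $\lambda<1$ makes $(\sqrt[T]{\lambda})^t$ decay and the geometric sum converge, so $\gamma_t$ is uniformly bounded, and boundedness of $x_t$ and $u_t$ then follows from the theorem's final bounds together with the standing assumption that $R_t$ and $M_t$ are bounded sequences. Your remark that ``stable'' here must be read as bounded-input/bounded-state rather than asymptotic convergence is a fair and accurate clarification of what the paper leaves implicit.
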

\noindent Furthermore, the next proposition shows we can verify condition \eqref{eq:thmasum} over polytopes $\Pt{}$, by checking \eqref{eq:thmasum} over the extreme points $\mathcal{E}\left(\Pt{}\right)$ of $\Pt{}$.
\begin{prop}\label{prop:polycheck}
Let $\c{P}$ be a convex set and let $\mathcal{E}\left(\Pt{}\right)$ denote the set of extreme points of $\c{P}$. Then, the following equivalence holds:
\begin{align}
&&   \notag \sum^{T}_{k=1} \left\|\Delta_k(A,B,R_t,M_t)\right\| &\leq \lambda,\quad \forall \left[A,B\right] \in \mathcal{E}\left(\c{P}\right)\\
\label{eq:suffcond}\Leftrightarrow&& \sum^{T}_{k=1} \left\|\Delta_k(A,B,R_t,M_t)\right\| &\leq \lambda,\quad \forall \left[A,B\right] \in \c{P}
\end{align}
\end{prop}
\begin{proof}
This follows directly by convexity of $\Pt{}$ and convexity of the function
\begin{align}
  \sum^{T}_{k=1} \left\|\Delta_k(A,B,R,M)\right\|
\end{align}
in $A,B$ for fixed $R$ and $M$. 
\end{proof}

\noindent Combining this proposition with our previous theorem, gives us the following interesting corollary on robustness of time-invariant controller in SLS implementation:
\begin{coro}\label{coro:stabintRM}
If condition \eqref{eq:suffcond} holds for a polytope $\c{P}$ and some fixed $R$ and $M$, then the time-invariant SLS controller 
\begin{align*}
&\hat{\delta}_t = x_t-\sum^{T-1}_{k=1} R(k+1)\hat{\delta}_{t-k} &u_t = \sum^{T-1}_{k=0} M(k+1)\hat{\delta}_{t-k}
\end{align*}
stabilizes \textit{any} time-varying system \eqref{eq:syst} with $\left[A_t,B_t\right] \in \c{P}$.
\end{coro}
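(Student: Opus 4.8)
The plan is to derive this statement directly from \thmref{thm:slst} by verifying that all of its hypotheses are met in the present setting, where the controller $R,M$ is time-invariant but the plant $[A_t,B_t]$ is time-varying. The essential observation is that fixing $R$ and $M$ (i.e.\ setting $R_t\equiv R$ and $M_t\equiv M$) makes the adaptation term in the theorem vanish, while the polytopic constraint $[A_t,B_t]\in\c{P}$ supplies the uniform robustness margin $\lambda$ at every time step. So the whole proof is a matter of matching the corollary's hypotheses to the theorem's.

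First I would verify the stability condition \eqref{eq:thmasum}. Because the controller is held fixed, $\Delta_k(A_t,B_t,R_t,M_t)=\Delta_k(A_t,B_t,R,M)$ for every $t$ and $k$. Since $[A_t,B_t]\in\c{P}$ at each instant, and the hypothesis \eqref{eq:suffcond} guarantees $\sum_{k=1}^T\|\Delta_k(A,B,R,M)\|\leq\lambda$ for \emph{all} $[A,B]\in\c{P}$, the inequality $\sum_{k=1}^T\|\Delta_k(A_t,B_t,R,M)\|\leq\lambda$ holds pointwise in $t$. Here \propref{prop:polycheck} is what makes the hypothesis practically checkable, reducing verification over all of $\c{P}$ to the finitely many extreme points $\Ex{\c{P}}$. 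Next I would dispose of the adaptation condition \eqref{eq:adaprule}: time-invariance gives $(R_{t-1}-R_t)(k+1)=0$ for all $k$, so the left-hand side of \eqref{eq:adaprule} is identically zero and we may take $m_a=0$.

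It then remains to check the standing boundedness assumption of \thmref{thm:slst}. Since $\c{P}$ is a polytope it is compact, so the sequences $A_t$ and $B_t$ are bounded; $R_t\equiv R$ and $M_t\equiv M$ are trivially bounded. With all hypotheses in place, \thmref{thm:slst} yields $\|\hat{\delta}_t\|\leq\gamma_t$, and—taking $\lambda<1$, the case relevant to stabilization—the bound \eqref{eq:ineqstab} with $m_a=0$ shows $\gamma_t$ is uniformly bounded (indeed $\gamma_t\to\hat{\eta}/(1-\lambda)$). The theorem's concluding estimates on $\|x_t\|$ and $\|u_t\|$, together with boundedness of the constant matrices $R,M$, then give bounded state and input, i.e.\ closed-loop stability, for every admissible realization $[A_t,B_t]\in\c{P}$.

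I do not expect a genuine obstacle, since this is a corollary: the only points needing care are the two bookkeeping facts that $m_a=0$ for a fixed controller and that instantaneous membership $[A_t,B_t]\in\c{P}$ is exactly what transfers the uniform-over-$\c{P}$ margin of \eqref{eq:suffcond} into the per-time condition \eqref{eq:thmasum} required by \thmref{thm:slst}. If one wanted the statement for $\lambda\geq 1$ as well, one would instead invoke \eqref{eq:inequnstab}, but that branch does not certify stability.
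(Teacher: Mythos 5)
Your proposal is correct and follows exactly the route the paper intends (the paper gives no written-out proof, stating only that the corollary follows by combining \propref{prop:polycheck} with \thmref{thm:slst}): you instantiate the theorem with $R_t\equiv R$, $M_t\equiv M$ so that \eqref{eq:adaprule} holds with $m_a=0$, and use membership $[A_t,B_t]\in\c{P}$ together with \eqref{eq:suffcond} to get \eqref{eq:thmasum} at every $t$. Your added remark that the conclusion of stabilization implicitly requires $\lambda<1$ (so that \eqref{eq:ineqstab} rather than \eqref{eq:inequnstab} applies) is a point the corollary's statement glosses over, and is worth making explicit.
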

\noindent In the next section, \propref{prop:polycheck} will prove useful to compute robust choices of $R_t$ and $M_t$ for polytope uncertainties in the system parameters.

\section{A Scheme for Robust and Adaptive Control with SLS Implementations}\label{sec:RASLS}
\noindent We will combine the findings in \secref{sec:poly} and \secref{sec:slst} to propose a robust adaptive control scheme that can simultaneously learn and control linear systems and will state robustness results for the resulting closed loop. To focus on the main ideas, we will first introduce the proposed control framework for the case of a single linear system. The later section will show an extension of the technique that addresses the general setting of our problem statement \secref{sec:prob} and satisfies the mentioned requirements for scalable implementation.

\subsection{Single System Case}\label{subsec:singlesys}
\noindent For this section we will simplify our problem formulation from \secref{sec:prob} to the case of a single subsystem, i.e. assume
\begin{align*}
&x_{t+1} = Ax_t + Bu_t + w_t &\left\| w_t\right\| \leq \eta
\end{align*}
Combining polytopes of consistent parameters as defined in \defref{def:Ptj} with our previous results from \thmref{thm:slst} and \propref{prop:polycheck} gives rise to the adaptive control algorithm \algoref{algo:RAsls}, which solves the convex optimization problem \eqref{eq:centsls} with the convex cost functions \eqref{eq:flam1} and \eqref{eq:fCD1} at every time-step $t$.
\begin{align}
\label{eq:flam1}f_\lambda(R,M,\lambda) &= \lambda \\
 \label{eq:fCD1}f_{C,D}(R,M,\lambda) &= \sum\limits^{T}_{k=1}\left\|C R(k) + D M(k) \right\|
\end{align}
Moreover, $C$ and $D$ are cost design matrices, $m_a>0$ is a tuning parameter which we will call \textit{adaptation margin} and $\Ex{\Pt{t}}$ denotes the computed set of extreme points of the polytope $\Pt{t}$.\\
\begin{align}
\label{eq:centsls} c_t = \min_{\lambda_t,R_t,M_t}&\quad \quad f(R_t,M_t,\lambda_t)\\
\notag \text{s.t.}&\forall A'\in \c{M}_A(\Ex{\Pt{t}}), B'\in \c{M}_B(\Ex{\Pt{t}}):\\
\label{eq:Deltaconst}& \sum^{T}_{k=1} \left\|\Delta_k(A',B',R_t,M_t)\right\| \leq \lambda_t\\
\label{eq:StableAdaptation}&\left\|\sum^{T-1}_{k=1} \left(R_{t-1} -R_{t}\right)(k+1)\hat{\delta}_{t-k}\right\| \leq m_a \\
 &R_t(1) = I 
\end{align}

\begin{algorithm}
\DontPrintSemicolon
\SetKwInOut{Input}{Input}
\SetAlgoLined
\Input{$\Pt{0}$, $C$, $D$, $m_{a}$, $\lambda^*$, $T$, $\left\{\mathcal{A}_s,\mathcal{B}_s \right\}$}
\BlankLine

	$\lambda_0$, $R_0$, $M_0$ $\leftarrow$ solve \eqref{eq:centsls} with $\Pt{0}$ and $f_\lambda$\; 
	$\hat{\delta}_0 \leftarrow x_0$\;
	apply $u_0 \leftarrow M_0(1)\hat{\delta}_0$\;

\For{ t = 1,2,\dots}{
		compute constraints $\mathcal{C}_t\left(x_t,x_{t-1},u_{t-1}\right)$\;
		update polytope $\Pt{t}\leftarrow\Pt{t-1}\cap\mathcal{C}_t$\;
		$\lambda_t$,$R_t$, $M_t$ $\leftarrow$ solve \eqref{eq:centsls} with $f_\lambda$\;
		\If{$\lambda_t \leq \lambda^*$}{
		$R_t$, $M_t$ $\leftarrow$ solve \eqref{eq:centsls} with $f_{C,D}$ s.t. $\lambda_t \leq \lambda^*$\;
		}
		$\hat{\delta}_t \leftarrow x_t - \left(\sum^{T-1}_{k=1} R_{t}(k+1)\hat{\delta}_{t-k}\right)$\;
		apply $u_t \leftarrow \sum^{T-1}_{k=1} M_{t}(k+1)\hat{\delta}_{t}$\;
}
 \caption{Adaptive Robust Control with SLS}
 \label{algo:RAsls}
\end{algorithm} 

\noindent At the high-level, the adaptive control scheme in \algoref{algo:RAsls} continuously infers new constraints from observations using \eqref{eq:Ctcons} to update the polytope $\Pt{t}$ of consistent parameters (line 5,6) and uses this information in (line 7-10) to find $R_t$ and $M_t$ that satisfy the robustness condition \thmref{thm:slst} at time $t$. Moreover the algorithm finds such robust controllers in two steps. In each iteration, it first searches for $R_t$ and $M_t$ that achieve the smallest robustness margin $\lambda_t$ (line 7) and only if we find feasible controllers that guarantee a minimum desired level of robustness $\lambda^*$, the algorithm re-solves the optimization problem \eqref{eq:centsls} in (line 9) w.r.t. a desired performance objective \eqref{eq:fCD1}. The motivation behind this two-step procedure is clear: Optimizing for a performance objective is only reasonable if robust stability of the closed loop is possible.\\ 

\noindent To verify that algorithm \algoref{algo:RAsls} is well-defined, we need to show that the optimization problem \eqref{eq:centsls} depends only on available information and is feasible for all time. Inspecting the constraints of \eqref{eq:centsls}, we can verify that $R_{t-1}$ and $\hat{\delta}_{t-1}, \dots, \hat{\delta}_{t-T}$ are always available at time $t$ and therefore the optimization problem is well-posed. Furthermore, the problem at time $t=0$ in (Line 1) is always feasible, and by the following proposition \propref{prop:recfeas}, we see that feasibility is guaranteed for all time.
\begin{prop}[Recursive Feasibility]\label{prop:recfeas}
If $R_{t'}, M_{t'}, \lambda_{t'}$ are feasible w.r.t. the constraints of \eqref{eq:centsls} at time $t'$, then $R_{t} = R_{t'}$,$M_{t} = M_{t'}$,$\lambda_{t} = \lambda_{t'}$ are a feasible solution for all time $t \geq t'$. 
\end{prop}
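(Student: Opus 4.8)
The plan is to exhibit the \emph{constant continuation} $R_t = R_{t'}$, $M_t = M_{t'}$, $\lambda_t = \lambda_{t'}$ and verify that it satisfies each of the three constraints of \eqref{eq:centsls} at every time $t \geq t'$. The argument rests on two structural facts: the consistency polytopes are nested in time, and the map governing the robustness constraint is convex, so that the vertex check already exploited in \propref{prop:polycheck} lifts to the whole polytope. The only genuinely delicate point is the temporal coupling in the adaptation bound, which I would dispatch by an induction tailored to the constant continuation.

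First I would record the monotonicity of the uncertainty sets. From line 6 of \algoref{algo:RAsls} (equivalently from \eqref{def:Ptj}) we have $\Pt{t} = \Pt{t-1} \cap \mathcal{C}_t \subseteq \Pt{t-1}$, so by transitivity $\Pt{t} \subseteq \Pt{t'}$ for all $t \geq t'$. Because $A$ and $B$ depend affinely on $\alpha$ through \eqref{eq:AvuBu}, this nesting is inherited by the consistent matrix sets, giving $\c{M}_A(\Pt{t}) \subseteq \c{M}_A(\Pt{t'})$ and likewise for $\c{M}_B$; in particular the vertex images $\c{M}_A(\Ex{\Pt{t}})$, $\c{M}_B(\Ex{\Pt{t}})$ at time $t$ lie inside the full matrix sets at time $t'$.

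Next I would handle the robustness constraint \eqref{eq:Deltaconst} together with the normalization. Since $\Delta_k(A,B,R,M)$ is affine in $(A,B)$ for fixed $R,M$, the map $\alpha \mapsto \sum_k \|\Delta_k(A,B,R_{t'},M_{t'})\|$ is convex in $\alpha$ (a sum of norms of affine maps), which is exactly the convexity underlying \propref{prop:polycheck}. Hence the feasibility of \eqref{eq:Deltaconst} at $t'$, asserted over the extreme points $\Ex{\Pt{t'}}$, is equivalent to the bound $\sum_k \|\Delta_k\| \leq \lambda_{t'}$ holding over the entire polytope $\Pt{t'}$. By the nesting $\Pt{t} \subseteq \Pt{t'}$ this bound then holds over $\Pt{t}$, and a fortiori over $\Ex{\Pt{t}}$, so \eqref{eq:Deltaconst} is satisfied at time $t$ by the unchanged triple. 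The normalization $R_t(1) = R_{t'}(1) = I$ is inherited immediately.

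The remaining constraint, the stable-adaptation bound \eqref{eq:StableAdaptation}, is the only one that couples $R_{t-1}$ with $R_t$, so it cannot be argued one-shot and I expect this to be the main obstacle; here I would proceed by induction on $t$. At $t = t'$ the bound holds by the feasibility hypothesis. For $t > t'$, the inductive choice $R_{t-1} = R_{t'} = R_t$ gives $(R_{t-1} - R_t)(k+1) = 0$ for every $k$, so the left-hand side of \eqref{eq:StableAdaptation} equals $0 \leq m_a$, which holds since $m_a > 0$. This closes the induction and shows the constant continuation is feasible for all $t \geq t'$. The point worth emphasizing is that recursive feasibility holds not because a fixed controller automatically meets the adaptation margin, but because freezing $R_t$ makes the increment vanish outright.
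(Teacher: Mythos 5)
Your argument is correct and follows essentially the same route as the paper's one-line proof, which rests on exactly the two observations you make: the nesting $\Pt{t} \subseteq \Pt{t'}$ handles \eqref{eq:Deltaconst}, and the constant continuation makes the increment in \eqref{eq:StableAdaptation} vanish. Your explicit lift from the extreme points $\Ex{\Pt{t'}}$ to the full polytope via the convexity in \propref{prop:polycheck} is a detail the paper leaves implicit, and is a worthwhile addition since the extreme points of $\Pt{t}$ need not be extreme points of $\Pt{t'}$.
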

\begin{proof} 
Notice that $\Pt{t} \subset \Pt{t-1}$ and that $R_{t} = R_{t'}$ for $t \geq t'$ eliminates the condition \eqref{eq:StableAdaptation}.
\end{proof}
\begin{prop}[Feasibility implies Robustness]\label{prop:feasrob}
If $R_{t'}, M_{t'}, \lambda_{t'}$ are feasible w.r.t. the constraints of \eqref{eq:centsls} at time $t'$, 
then they satisfy \eqref{eq:Deltaconst} and \eqref{eq:StableAdaptation} for the closed loop at time $t'$.
\end{prop}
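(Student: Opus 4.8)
\section*{Proof proposal}

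The plan is to reduce the two required inequalities to objects that are already under our control. The constraint \eqref{eq:StableAdaptation} is verbatim the hypothesis \eqref{eq:adaprule} of \thmref{thm:slst}, so nothing needs to be shown there; all the work goes into deriving \eqref{eq:thmasum} for the \emph{true} closed-loop matrices $(A,B)$ out of the feasibility constraint \eqref{eq:Deltaconst}, which is only imposed at the extreme points of the consistent polytope $\Pt{t'}$.

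First I would argue that the true parameter vector always lies in $\Pt{t'}$. By the single-system dynamics the true $\alpha$ satisfies $x_k-\sum_s \alpha_s \hat{y}_{s,k-1}=w_{k-1}$ with $\left\|w_{k-1}\right\|\leq\eta$ for every $k\le t'$; hence the defining inequalities of $\Pt{t'}$ in \eqref{def:Ptj} all hold for the true $\alpha$, and since $\alpha\in\Pt{0}$ by assumption \eqref{eq:previnfo}, we conclude $\alpha\in\Pt{t'}$. By \eqref{eq:MofPA} and \eqref{eq:MofPB} the actual system matrices therefore satisfy $A\in\c{M}_A(\Pt{t'})$ and $B\in\c{M}_B(\Pt{t'})$.

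Next I would transport the extreme-point constraint \eqref{eq:Deltaconst} to the true pair $(A,B)$. The map $\alpha\mapsto[A(\alpha),B(\alpha)]$ is affine, so the coupled image $\c{P}:=\left\{[A(\alpha),B(\alpha)] : \alpha\in\Pt{t'}\right\}$ is a convex polytope whose extreme points are images of extreme points of $\Pt{t'}$, namely the ``diagonal'' pairs $[A(\alpha^{(i)}),B(\alpha^{(i)})]$ with $\alpha^{(i)}\in\Ex{\Pt{t'}}$. These diagonal pairs form a subset of the product grid $\c{M}_A(\Ex{\Pt{t'}})\times\c{M}_B(\Ex{\Pt{t'}})$ over which \eqref{eq:Deltaconst} is enforced, so feasibility yields $\sum_{k}\left\|\Delta_k(A(\alpha^{(i)}),B(\alpha^{(i)}),R_{t'},M_{t'})\right\|\le\lambda_{t'}$ at every extreme point of $\c{P}$. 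Applying \propref{prop:polycheck} to the convex set $\c{P}$ — using joint convexity of $\sum_k\left\|\Delta_k(A,B,R_{t'},M_{t'})\right\|$ in $(A,B)$ — propagates the bound to all of $\c{P}$, and in particular to the true $[A,B]\in\c{P}$. This is precisely \eqref{eq:thmasum} with $\lambda=\lambda_{t'}$, which together with \eqref{eq:StableAdaptation} verifies both hypotheses of \thmref{thm:slst} at time $t'$.

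I expect the only genuinely delicate point to be the coupling between $A$ and $B$: the feasibility constraint \eqref{eq:Deltaconst} ranges over $A'$ and $B'$ \emph{independently}, whereas the true matrices are tied together through a single $\alpha$. The resolution is that the independent product over extreme points is a superset of the coupled diagonal, hence if anything more conservative; one must simply be careful to invoke \propref{prop:polycheck} on the coupled polytope $\c{P}$ rather than on the larger product set, which is exactly what makes the conclusion land on the actual closed loop rather than on a fictitious decoupled system.
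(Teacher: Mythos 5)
Your proposal is correct and takes essentially the same route as the paper's own proof: dispose of \eqref{eq:StableAdaptation} as being verbatim \eqref{eq:adaprule}, observe that the true parameters lie in $\Pt{t'}$ by construction of the consistent polytope, and invoke \propref{prop:polycheck} to extend the extreme-point constraint \eqref{eq:Deltaconst} to the entire set, hence to the true $[A,B]$. The only difference is that you explicitly handle the coupling of $A$ and $B$ through a single $\alpha$ by applying the convexity argument to the coupled image polytope rather than the product of $\c{M}_A$ and $\c{M}_B$ --- a detail the paper's two-line proof glosses over, but not a different argument.
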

\begin{proof}
Condition \eqref{eq:StableAdaptation} and \eqref{eq:adaprule} are the same, so we only have to check \eqref{eq:thmasum}.
 By \propref{prop:polycheck} we now that $R_{t'}, M_{t'}, \lambda_{t'}$ satisfies \eqref{eq:Deltaconst} over the entire polytope $\c{M}_A(\Pt{t'})$ and $\c{M}_B(\Pt{t'})$. Then, by definition of $\Pt{t'}$, the true parameters $A$ and $B$ lie in $\c{M}_A(\Pt{t'})$ and $\c{M}_B(\Pt{t'})$, and therefore $R_{t'}, M_{t'}, \lambda_{t'}$ satisfy \eqref{eq:thmasum}.
\end{proof}
\noindent Moreover, \eqref{eq:StableAdaptation} can be understood as a condition for stable adaptation as we can tune the adaptation margin $m_a$ to control how much $R_t$ can change between time-steps without sacrificing stability.
\propref{prop:recfeas} and \propref{prop:feasrob} give immediate results for robust stability of the closed loop:
\begin{coro}\label{coro:lambdadec}
$\lambda_{t} \leq \max\left\{\lambda_{t-1},\lambda^* \right\} $
\end{coro}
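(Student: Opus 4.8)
The plan is to combine recursive feasibility \propref{prop:recfeas} with the structure of the two-step optimization in Algorithm~\ref{algo:RAsls}. The governing intuition is that the controller from the previous step, held fixed, is always a feasible candidate at time $t$, so the smallest margin attainable at $t$ can never exceed $\lambda_{t-1}$; the only mechanism by which the deployed margin could \emph{increase} is the performance re-solve in line~9, and that re-solve is explicitly capped at $\lambda^*$. Throughout, I read $\lambda_t$ as the actual robustness margin of the controller finally deployed at time $t$.

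First I would certify, via \propref{prop:recfeas}, that the triple $(R_{t-1}, M_{t-1}, \lambda_{t-1})$ is feasible for \eqref{eq:centsls} at time $t$. Concretely, choosing $R_t = R_{t-1}$ makes the left-hand side of the adaptation constraint \eqref{eq:StableAdaptation} vanish, so that constraint holds for any $m_a>0$; and since $\Pt{t} \subseteq \Pt{t-1}$, the robustness constraint \eqref{eq:Deltaconst}, which held with bound $\lambda_{t-1}$ over $\Pt{t-1}$, remains valid with the same bound over the smaller set $\Pt{t}$. Because line~7 solves \eqref{eq:centsls} with the objective $f_\lambda$ of \eqref{eq:flam1}, i.e. it minimizes the margin over all feasible controllers, the resulting minimal margin $\lambda_t^{\min}$ satisfies $\lambda_t^{\min} \leq \lambda_{t-1}$.

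Next I would split on the branch taken in line~8. If $\lambda_t^{\min} > \lambda^*$, the algorithm skips the re-solve, so the deployed margin is $\lambda_t = \lambda_t^{\min} \leq \lambda_{t-1} \leq \max\{\lambda_{t-1}, \lambda^*\}$. If instead $\lambda_t^{\min} \leq \lambda^*$, then line~9 re-solves \eqref{eq:centsls} with the performance objective $f_{C,D}$ under the additional constraint that the margin not exceed $\lambda^*$; this constrained problem is feasible (the line~7 solution witnesses it), and the deployed controller therefore satisfies $\lambda_t \leq \lambda^* \leq \max\{\lambda_{t-1}, \lambda^*\}$. Combining the two cases gives the claim, with no separate base case needed since the same argument applies at $t=1$ using the initialization from line~1.

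The only delicate point, and the step I would treat most carefully, is the bookkeeping of the symbol $\lambda_t$ across the optional re-solve: one must consistently interpret it as the margin of the controller actually applied, rather than the intermediate value produced in line~7. Once that is fixed, the argument is a short two-case analysis, with all the substantive content carried by \propref{prop:recfeas} and the monotonicity $\Pt{t} \subseteq \Pt{t-1}$.
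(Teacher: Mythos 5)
Your argument is correct and is essentially the paper's own proof, which likewise rests on the monotonicity $\Pt{t}\subset\Pt{t-1}$, recursive feasibility of the previous controller (\propref{prop:recfeas}), and the two-branch structure of lines 7--10 of \algoref{algo:RAsls}; you have simply spelled out the case split and the feasibility witness that the paper leaves implicit. No gaps.
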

\begin{proof}
The relation $\Pt{t} \subset \Pt{t-1}$ tells us that satisfying condition \eqref{eq:Deltaconst} for all extreme points becomes easier with every iteration step. Then by recursive feasibility \propref{prop:recfeas}, if $\lambda_{t-1} \geq \lambda^*$, then (line 7-10) of \algoref{algo:RAsls} guarantees $\lambda_{t} \leq \max \left\{\lambda_{t-1},\lambda^*\right\}$.
\end{proof}
\begin{coro}\label{coro:robmargin}
 $\lambda_{t'}$ is a robustness margin for the closed loop system for all $t \geq t'$ and the corresponding bounds of \thmref{thm:slst} apply.
\end{coro}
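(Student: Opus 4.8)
The plan is to combine the per-step robustness guarantee of \propref{prop:feasrob} with the monotone-type bound of \corref{coro:lambdadec}, exploiting that condition \eqref{eq:thmasum} is an upper-bound constraint and is therefore only \emph{relaxed} when $\lambda$ is increased. Recall that calling $\lambda_{t'}$ a robustness margin for all $t\geq t'$ means precisely that the closed loop generated by \algoref{algo:RAsls}, driven by the \emph{true} system matrices $A,B$, satisfies both hypotheses \eqref{eq:thmasum} and \eqref{eq:adaprule} of \thmref{thm:slst} with the single value $\lambda=\lambda_{t'}$ at every time $t\geq t'$; once this is established, the conclusion that the bounds \eqref{eq:ineqstab}--\eqref{eq:inequnstab} apply is immediate from \thmref{thm:slst}.

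First I would fix $t\geq t'$ and invoke \propref{prop:feasrob}: the controller $R_t,M_t,\lambda_t$ produced at iteration $t$ is feasible for \eqref{eq:centsls}, so it satisfies the adaptation constraint \eqref{eq:StableAdaptation} --- which is verbatim the hypothesis \eqref{eq:adaprule} --- and, since the true $A,B$ lie in $\c{M}_A(\Pt{t})$ and $\c{M}_B(\Pt{t})$, it also satisfies $\sum_{k=1}^{T}\left\|\Delta_k(A,B,R_t,M_t)\right\|\leq\lambda_t$. Thus hypothesis \eqref{eq:adaprule} holds outright, while hypothesis \eqref{eq:thmasum} holds with the time-varying bound $\lambda_t$.

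Next I would upgrade the time-varying bound $\lambda_t$ to the single margin $\lambda_{t'}$. Iterating \corref{coro:lambdadec} from $t'$ yields $\lambda_t\leq\max\{\lambda_{t'},\lambda^*\}$ for every $t\geq t'$, so in the regime $\lambda_{t'}\geq\lambda^*$ (the case of interest, since the margin is only invoked before the design has reached the desired robustness level $\lambda^*$) one obtains $\lambda_t\leq\lambda_{t'}$ for all $t\geq t'$ by induction. Because \eqref{eq:thmasum} only becomes easier to satisfy as its right-hand side grows, $\sum_{k=1}^{T}\left\|\Delta_k(A,B,R_t,M_t)\right\|\leq\lambda_t\leq\lambda_{t'}$ shows that \eqref{eq:thmasum} holds uniformly with $\lambda=\lambda_{t'}$. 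Both hypotheses of \thmref{thm:slst} are therefore met with $\lambda=\lambda_{t'}$, which is exactly the assertion, and the stated $\gamma_t$ bounds follow.

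I expect the main obstacle to be the bookkeeping around $\lambda^*$ and the two-step optimization in \algoref{algo:RAsls} (lines 7--9): the performance re-solve in line 9 may increase the reported margin up to $\lambda^*$, so the clean monotonicity $\lambda_t\leq\lambda_{t'}$ is available only when $\lambda_{t'}\geq\lambda^*$; in the complementary regime the correct uniform margin is $\max\{\lambda_{t'},\lambda^*\}$ rather than $\lambda_{t'}$ itself. The other point requiring care is that \propref{prop:feasrob} must be applied along the \emph{realized} trajectory, so that the $\hat{\delta}_{t-k}$ entering \eqref{eq:adaprule} are the actual internal-state samples rather than hypothetical ones --- which is precisely what the recursive feasibility of \propref{prop:recfeas} secures.
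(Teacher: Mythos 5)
Your proof is correct and follows the same route the paper intends: the paper offers no explicit argument for this corollary, merely asserting that it is immediate from \propref{prop:recfeas} and \propref{prop:feasrob}, and the chain you make explicit (feasibility of the realized $R_t,M_t,\lambda_t$ along the actual trajectory gives the hypotheses of \thmref{thm:slst} with $\lambda_t$; iterating \corref{coro:lambdadec} upgrades this to a single uniform margin) is exactly that argument. Your caveat that the statement holds verbatim only when $\lambda_{t'}\geq\lambda^*$, and that in the complementary regime the correct uniform margin is $\max\{\lambda_{t'},\lambda^*\}$, is a fair reading of the two-step optimization in lines 7--9 of \algoref{algo:RAsls} and sharpens rather than contradicts the paper's claim.
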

\begin{coro}\label{coro:robstab}
If $\lambda_0 < 1$ in (Line 1) of \algoref{algo:RAsls}, then the closed loop system is stable for all time.
\end{coro}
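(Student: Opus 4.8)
The plan is to reduce the claim to a single application of \thmref{thm:slst}, whose hypotheses are a uniform-in-time robustness bound \eqref{eq:thmasum} with some common margin $\lambda<1$ together with the adaptation bound \eqref{eq:adaprule}. Once these hold with one fixed $\lambda<1$, the theorem yields $\|\hat{\delta}_t\|\leq \gamma_t$ with $\gamma_t$ given by the stable branch \eqref{eq:ineqstab}, which stays bounded (indeed converges to $(\hat{\eta}+m_a)/(1-\lambda)$); feeding this into the theorem's bounds on $\|x_t\|$ and $\|u_t\|$, and using that the sequences $\|R_t(k)\|$ and $\|M_t(k)\|$ are bounded, then gives uniform boundedness of state and input, i.e.\ closed-loop stability. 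So the whole proof comes down to exhibiting one margin $\lambda<1$ that works for every $t$.

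First I would fix the true system matrices $A,B$. By \propref{prop:feasrob}, the triple $(R_t,M_t,\lambda_t)$ returned by \algoref{algo:RAsls} at each step is feasible for \eqref{eq:centsls} and hence satisfies the robustness inequality \eqref{eq:thmasum} for the \emph{true} $A,B$ with instantaneous margin $\lambda_t$, as well as the adaptation constraint \eqref{eq:adaprule} with margin $m_a$. (In the single-system setting $A_t=A$, $B_t=B$ are constant, so $\Delta_k(A_t,B_t,R_t,M_t)=\Delta_k(A,B,R_t,M_t)$.) Thus it suffices to produce a uniform bound $\lambda_t\leq\bar\lambda$ and set $\lambda=\bar\lambda$ in \thmref{thm:slst}.

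Next I would obtain this uniform bound from \corref{coro:lambdadec}. That corollary gives $\lambda_t\leq\max\{\lambda_{t-1},\lambda^*\}$, and a one-line induction then yields $\lambda_t\leq\max\{\lambda_0,\lambda^*\}=:\bar\lambda$ for all $t\geq0$. Since the hypothesis gives $\lambda_0<1$ and the robustness target is chosen with $\lambda^*<1$, we get $\bar\lambda<1$. Hence \eqref{eq:thmasum} holds for all $t$ with the single margin $\bar\lambda<1$, and \thmref{thm:slst} applies through its stable branch, completing the argument via the chain described above. Note that \propref{prop:recfeas} guarantees \eqref{eq:centsls} stays feasible, so $\lambda_t$ is well defined at every $t$ and the induction never breaks down.

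The main obstacle, and the only genuinely delicate point, is the uniformity of the margin: \thmref{thm:slst} demands one fixed $\lambda$ valid at \emph{every} time step, whereas the adaptive scheme produces a time-varying $\lambda_t$ and the performance re-solve in (line 9) can \emph{increase} the margin back up toward $\lambda^*$. This is exactly what \corref{coro:lambdadec} controls, which is why the argument hinges on it together with the standing choice $\lambda^*<1$; with $\lambda^*\geq1$ the re-solve could drive some $\lambda_t\geq1$, and the bound \eqref{eq:ineqstab} would have to be replaced by the unstable branch \eqref{eq:inequnstab}. The remaining steps — the induction and the substitution into the theorem's bounds — are routine.
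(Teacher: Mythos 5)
Your argument is correct and follows essentially the same route the paper intends: \propref{prop:feasrob} together with the monotonicity $\lambda_t \leq \max\{\lambda_{t-1},\lambda^*\}$ from \corref{coro:lambdadec} yields the uniform margin $\max\{\lambda_0,\lambda^*\}<1$, after which \thmref{thm:slst} delivers boundedness of $\hat{\delta}_t$, $x_t$ and $u_t$. Your explicit remark that the performance re-solve in (line 9) forces the standing assumption $\lambda^*<1$ is a point the paper leaves implicit but clearly relies on.
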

\noindent  Furthermore, we will call initial uncertainty sets $\Pt{0}$ \textit{strongly stabilizable} if they allow $\lambda_0 <1$ to be solution to the computations of (Line 1).\\
 
\noindent In the next section we will address how the same approach extends to the general problem setting. %

\subsection{Large-Scale System Case}

\noindent Consider now our original problem statement from \secref{sec:prob}, where we try to control the system
\begin{align}\label{eq:sysdist2}
x^{j}_{t+1} & = \sum\limits_{i \in \c{N}(j)}A^{\up{j}{i}} x^i_t + B^j u^j_t + w^j_t.
\end{align}
for potentially large number of subsystems and with the structured uncertainties described by equations  \eqref{eq:AvuBu} and \eqref{eq:previnfo}.\\
As before, the system-wide controller is assumed to have the form \eqref{eq:usls} and \eqref{eq:dsls}. Additionally, by enforcing more sparsity constraints on $R_t$ and $M_t$, we can represent the SLS implementation for subsystem $j$ in the decomposed form:
\begin{align}\label{eq:dlocsls}
u^j_t &= \sum_{i} \sum^{T-1}_{k=0}  \hat{M}^{\up{j}{i}}_{t}(k+1)\hat{\delta}^i_{t-k}\\
\label{eq:deltasls} \hat{\delta}^j_t &= x^j_t+v^j_t-\sum_{i}\sum^{T-1}_{k=1}  \hat{R}^{\up{j}{i}}_{t}(k+1)\hat{\delta}^i_{t-k}
\end{align}
with $\hat{R}^{\up{i}{i}}_t(1) = I$ and $\hat{R}^{\up{j}{i}}_t(1) = 0$ for $i \neq j$. Furthermore, to allow for scalable implementation, we will enforce the following additional design constraints:

\begin{con}[Distributed Computation]\label{con:loccomp}
$\hat{M}^{\up{j}{i}}_{t} := M^{\up{j}{i}}_{t-d_{\up{j}{i}}}$, $\hat{R}^{\up{j}{i}}_{t} := R^{\up{j}{i}}_{t-d_{\up{j}{i}}} $
where $M^{\up{j}{i}}_t$, $R^{\up{j}{i}}_t$ and $\hat{\delta}^i_t$ are computed locally in subsystem $i$ and are broadcasted to the corresponding subsystem $j$ with a delay of $d_{\up{j}{i}}$.
\end{con}
\begin{con}[Localization and Communication Constraints]\label{con:loc}
For every subsystem $i$ define a local region $\c{L}(i) \subset \c{S}(i)$ and enforce the constraints 
\begin{align}\label{eq:locregRM}
\forall j \notin \c{L}(i), \,k:&&M^{\up{j}{i}}_t(k) &= 0 
&R^{\up{j}{i}}_t(k)&= 0,\\
\label{eq:locdelRM}
\forall k < d_{\up{j}{i}}:&& M^{\up{j}{i}}_t(k) &= 0.
&R^{\up{j}{i}}_t(k+1)&= 0
\end{align}
\end{con}
\noindent Under \conref{con:loccomp} and \conref{con:loc}, the implementation \eqref{eq:dlocsls}, \eqref{eq:deltasls} can be verified to satisfy our previously discussed implementation constraints \conref{con:delayintro}, \conref{con:locintro} and \conref{con:loccompintro} for this problem setting. \conref{con:loccomp} and condition \eqref{eq:locdelRM} assure that the communication delays in \conref{con:delayintro} are respected, since the computation of $u^j_t$ and $\hat{\delta}^j_t$ is ensured to depend only on information that is available to subsystem $j$ at time $t$. Moreover, although \conref{con:loccomp} requires every subsystem $i$ to send information to other subsystems, condition \eqref{eq:locregRM} and $\c{L}(i) \subset \c{S}(i)$ from \conref{con:loc} show that the required communication respects \conref{con:locintro}. Furthermore \conref{con:loccompintro} is being addressed, by assuming distributed computation with \conref{con:loccomp} and by condition \eqref{eq:locdelRM} which restricts the number of decision variables for each subsystem $i$ to the size of the region $\c{L}(i)$.

\noindent Following the same approach as in \secref{sec:RASLS}, we can derive the dynamics for the effective disturbance $\hat{\delta}^j$ and obtain the equation \eqref{eq:ddyndist}.
\begin{figure*}[b]
\begin{multline}\label{eq:ddyndist}
\hat{\delta}^j_t= \sum\limits_{i} \left[-\sum^{T}_{k=1} \Delta^{\up{j}{i}}_{k,t-1}\hat{\delta}^{i}_{t-k} -\sum^{T}_{k=1} \left(\Delta^{\up{j}{i}}_{k,t-1-d_{\up{j}{i}}}-\Delta^{\up{j}{i}}_{k,t-1}\right)\hat{\delta}^{i}_{t-k}
+ \sum^{T-1}_{k=1} \left(R^{\up{j}{i}}_{t-1-d_{\up{j}{i}}} -R^{\up{j}{i}}_{t-d_{\up{j}{i}}}\right)(k+1)\hat{\delta}^{i}_{t-k}\right] 
  + \hat{w}^j_t
\end{multline}
\begin{align}\label{eq:im1}
\forall 0 \leq h \leq \bar{d}_i -1: && \sum_{j: d_{\up{j}{i}} \geq h+1} \left\|\sum^{T}_{k = d_{\up{j}{i}}+1} \Delta^{j}_{k}\left(A,B,R^{\up{j}{i}}_t-R^{\up{j}{i}}_{t+h-d_{\up{j}{i}}},M^{\up{j}{i}}_t-M^{\up{j}{i}}_{t+h-d_{\up{j}{i}}} \right)\hat{\delta}^{i}_{t+h+1-k}\right\| &\leq m^{i}_1\\
\label{eq:im2}
\forall 0 \leq h \leq \bar{d}_i -1: && \sum_{j: d_{\up{j}{i}} = h+1} \left\|\sum^{T-1}_{k = h+2} \left(R^{\up{j}{i}}_t-R^{\up{j}{i}}_{t-1}\right)(k+1)\hat{\delta}^{i}_{t+h+1-k}\right\| &\leq m^{i}_2
\end{align}
\end{figure*}
We use the abbreviations $\Delta^{\up{j}{i}}_{k,t}$,$\hat{w}^j_t$ and the function $\Delta^{j}_{k}$ to simplify notation:
\begin{align}
\label{eq:Delij}&\Delta^{\up{j}{i}}_{k,t} :=\Delta^{j}_{k}\left(A,B,R^{\up{j}{i}}_{t},M^{\up{j}{i}}_{t} \right):= \dots \\
 \notag&\dots R^{\up{j}{i}}_t(k+1) -\sum\limits_{n \in \c{N}(j)}A^{\up{j}{n}}R^{\up{n}{i}}_t(k)-B^jM^{\up{j}{i}}_t(k)\\
\label{eq:whatj}
&\hat{w}^j_t := v^j_t -\sum\limits_{i}A^{\up{j}{i}}v^{i}_{t-1} + w^j_{t-1}.
\end{align}
Now, with similar ideas to \secref{sec:RASLS} we can design a distributed controller that guarantees boundedness of the effective disturbances $\hat{\delta}^j$. With the following assumption, we obtain the robustness result \thmref{thm:distsls}:
\begin{asp}\label{asp:propspeed}
The communication speed between subsystems is faster than the propagation speed of disturbances, i.e.:
$d_{\up{j}{i}} < 1+d_{\up{k}{i}},\quad \forall k \in \c{N}(j)$
\end{asp}
\begin{rem}
This is a common assumption in the distributed control community and known to not be very restrictive.
\end{rem}

\begin{thm}\label{thm:distsls}
Given \assref{asp:propspeed}, some $\lambda$, $\rho$ and the local parameters $m^{i}_1$ and $m^{i}_2$, assume that for all $i$, $R^{\up{j}{i}}_t$ and $M^{\up{j}{i}}_t$ satisfy the conditions \eqref{eq:im2}, \eqref{eq:im1} and 
\begin{align}
\label{eq:irob}\left\|\sum\limits_{j \in \c{L}(i)} \Delta^{j}_{k}(A,B,R^{\up{j}{i}}_t,M^{\up{j}{i}}_t)\right\| &\leq c_i \rho^{k-1}\\
\label{eq:ilam}\lambda_i = \sum\limits^{T}_{k=1} c_i \rho^{k-1} &\leq \lambda
\end{align}
for all times $t$, where $\bar{d}_i = \max_{j}{d_{\up{j}{i}}}$. Then the following inequality holds
\begin{align*}
\sum\limits_{j}\left\|\hat{\delta}^j_{t}\right\| \leq \lambda \max\limits_{1\leq k\leq T} \sum\limits_{j}\left\|\hat{\delta}^j_{t-k}\right\| + \sum\limits^{N}_{i=1} \left(m^{i}_1 + \bar{d}_i m^{i}_2 + \hat{w}^i_{t}\right)
\end{align*}
and the bounds of \lemref{lem:scalar} apply for $\sum\limits_{j}\left\|\hat{\delta}^j_{t}\right\|$.
\end{thm}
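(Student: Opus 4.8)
The plan is to collapse the vector recursion \eqref{eq:ddyndist} into a single scalar recursion for $z_t := \sum_j \|\hat{\delta}^j_t\|$ of exactly the type treated in \lemref{lem:scalar}, and then quote that lemma; this mirrors the derivation that turns \eqref{eq:ddyn} into \thmref{thm:slst}, but the block- and delay-structure forces me to work throughout with the block-sum norm $\|\cdot\|_\ast := \sum_j \|\cdot^j\|$. Concretely, I would take the norm of each block $\hat{\delta}^j_t$ in \eqref{eq:ddyndist}, sum over $j$, and split by the triangle inequality into four groups: the \emph{synchronized} term $\sum_j\|\sum_i\sum_k \Delta^{\up{j}{i}}_{k,t-1}\hat{\delta}^i_{t-k}\|$, the \emph{delay-mismatch} term built from $\Delta^{\up{j}{i}}_{k,t-1-d_{\up{j}{i}}}-\Delta^{\up{j}{i}}_{k,t-1}$, the \emph{adaptation} term built from $R^{\up{j}{i}}_{t-1-d_{\up{j}{i}}}-R^{\up{j}{i}}_{t-d_{\up{j}{i}}}$, and the disturbance $\sum_j\|\hat{w}^j_t\|$. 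The target is to dominate these four groups by $\lambda\max_{1\le k\le T}z_{t-k}$, $\sum_i m^i_1$, $\sum_i \bar{d}_i m^i_2$, and $\sum_i \hat{w}^i_t$ respectively, which is precisely the claimed inequality; \lemref{lem:scalar} then yields the stated bounds on $z_t$ once the time-independent constant $\sum_i(m^i_1+\bar{d}_i m^i_2+\sup_t\|\hat{w}^i_t\|)$ is identified with the $\eta$ of that lemma.

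The crucial step is the synchronized term. Regrouping the double sum by the \emph{source} index $i$ and using \conref{con:loc} (so that $\Delta^{\up{j}{i}}_{k,\cdot}=0$ unless $j\in\c{L}(i)$), I would recognise $\sum_{j\in\c{L}(i)}\|\Delta^{\up{j}{i}}_{k}\,\hat{\delta}^i_{t-k}\|$ as the action of the stacked column operator from source $i$ whose norm is bounded by \eqref{eq:irob}. The key observation is that, for the block-sum norm, the induced norm of the full lag-$k$ operator $\Delta_k$ is the \emph{maximum} over sources of these column norms, not their sum: since $\sum_i c_i\rho^{k-1}\|\hat{\delta}^i_{t-k}\|\le(\max_i c_i)\,\rho^{k-1}\sum_i\|\hat{\delta}^i_{t-k}\|$, one gets $\|\Delta_k\|_\ast\le \rho^{k-1}\max_i c_i$. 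Summing the geometric factor over $k$ and invoking \eqref{eq:ilam} turns this into $\sum_k\|\Delta_k\|_\ast\le \max_i\lambda_i\le\lambda$, so the synchronized group is bounded by $\lambda\max_{1\le k\le T}z_{t-k}$. This max-over-sources phenomenon is what makes the per-subsystem certificates \eqref{eq:irob} and \eqref{eq:ilam} aggregate into a single global margin $\lambda$, rather than into a sum that would scale with $N$.

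The two remaining groups are handled by telescoping the communication delay. Because $\Delta^{j}_k$ is affine in $(R,M)$, the mismatch $\Delta^{\up{j}{i}}_{k,t-1}-\Delta^{\up{j}{i}}_{k,t-1-d_{\up{j}{i}}}$ equals $\Delta^j_k(A,B,\,R^{\up{j}{i}}_{t-1}-R^{\up{j}{i}}_{t-1-d_{\up{j}{i}}},\,M^{\up{j}{i}}_{t-1}-M^{\up{j}{i}}_{t-1-d_{\up{j}{i}}})$, and the support constraints \eqref{eq:locdelRM} kill the terms with $k\le d_{\up{j}{i}}$; after the shift $t\mapsto t-1$ this is exactly the left-hand side of \eqref{eq:im1} at $h=0$, giving the single $m^i_1$ per source. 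The adaptation term is a genuine one-step increment $R^{\up{j}{i}}_{t-d_{\up{j}{i}}}-R^{\up{j}{i}}_{t-1-d_{\up{j}{i}}}$; partitioning the targets $j$ according to their delay value $d_{\up{j}{i}}=h+1$ for $h=0,\dots,\bar{d}_i-1$ and bounding each group by \eqref{eq:im2} accumulates to the $\bar{d}_i m^i_2$ per source. \assref{asp:propspeed} is what legitimises this reindexing: since $d_{\up{j}{i}}<1+d_{\up{k}{i}}$ for every $k\in\c{N}(j)$, the delays attached to the coupling $A^{\up{j}{n}}R^{\up{n}{i}}(k)$ inside $\Delta^{\up{j}{i}}_k$ stay ordered, so every delayed quantity actually used by subsystem $j$ is referenced to one coherent clock and the windows tile without overlap or double counting.

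The main obstacle I anticipate is exactly this delay bookkeeping together with the max-versus-sum reconciliation: I must verify that (i) after regrouping by source the block-sum operator norm collapses to $\max_i\lambda_i$ rather than $\sum_i\lambda_i$ — the step that keeps the global margin at $\lambda$ — and (ii) the telescoped mismatch and adaptation contributions land \emph{verbatim} on the left-hand sides of \eqref{eq:im1} and \eqref{eq:im2} under the index shift $h$, with \assref{asp:propspeed} ruling out cross-delay contamination and with \eqref{eq:locdelRM} trimming the summation ranges. Once the four groups are dominated as above, the recursion $z_t\le\lambda\max_{1\le k\le T}z_{t-k}+\sum_i(m^i_1+\bar{d}_i m^i_2+\hat{w}^i_t)$ holds for every $t$, and a direct application of \lemref{lem:scalar} to $z_t=\sum_j\|\hat{\delta}^j_t\|$ completes the proof.
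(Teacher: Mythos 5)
Your proposal follows essentially the same route as the paper's (sketched) proof: sum the block recursions \eqref{eq:ddyndist}, split by the triangle inequality into the synchronized, delay-mismatch, adaptation and disturbance groups (the paper's $H_t$, $J_t$, $K_t$ terms), bound the first via \eqref{eq:irob}--\eqref{eq:ilam} with the same max-over-sources step $\sum_i c_i\rho^{k}\|\hat{\delta}^i_{t-k}\| \leq (\max_s c_s)\rho^{k}\sum_i\|\hat{\delta}^i_{t-k}\|$, bound the other two via \eqref{eq:im1}--\eqref{eq:im2} after the delay reindexing, and finish with \lemref{lem:scalar}. Your write-up actually supplies more of the index bookkeeping than the paper, which dismisses that part as ``tedious manipulations of indices,'' but the argument is the same.
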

\begin{proof}
The proof follows the same ideas as in \secref{sec:RASLS}, but for the interest of space we will only provide a sketch. Using the definition \defref{eq:Delij} and \assref{asp:propspeed}, we can verify 
\begin{align}\label{eq:dropdelta}
\Delta^{\up{j}{i}}_{k,t} = 0, \quad k < d_{\up{j}{i}}
\end{align}
which follows by combining \eqref{eq:locdelRM} and \assref{asp:propspeed}. We then proceed to derive an expression for $\sum_{j} \left\|\hat{\delta}^j_{t} \right\|$ by summing up the $N$ equations of the form \eqref{eq:ddyndist} and by use of the triangle inequality we obtain the bound
\begin{align}
\sum\limits_{j}\left\|\hat{\delta}^j_{t+1} \right\| \leq H_{t} + J_{t} + K_{t} + \sum\limits_{j} \left\|\hat{w}^j_{t+1} \right\|
\end{align}
with the three terms $H_{t}$, $J_{t}$ and $K_{t}$:
\begin{align}
H_{t} &=\sum^{T}_{k=1}\sum_{i,j}\left\|\Delta^{\up{j}{i}}_{k,t}\right\| \left\|\hat{\delta}^{i}_{t+1-k} \right\|\\
\label{eq:Jt}J_{t} &= \sum_{j,i}\sum^{T}_{k=1}\left\| \left(\Delta^{\up{j}{i}}_{k,t-d_{\up{j}{i}}}-\Delta^{\up{j}{i}}_{k,t}\right)\hat{\delta}^{i}_{t+1-k}\right\|\\
\label{eq:Kt}K_{t} &= \sum_{j,i} \left\|\sum^{T}_{k=2} \left(R^{\up{j}{i}}_{t-d_{\up{j}{i}}} -R^{\up{j}{i}}_{t+1-d_{\up{j}{i}}}\right)(k)\hat{\delta}^{i}_{t+2-k} \right\|
\end{align}
Now using the condition \eqref{eq:irob} and \eqref{eq:ilam}, we can conclude
\begin{align}
\notag&H_{t}\\
\notag&\leq \sum^{T-1}_{k=0}\sum_{i,j}\left\|\Delta^{\up{j}{i}}_{k+1,t}\right\| \left\|\hat{\delta}^{i}_{t-k} \right\| \leq \sum^{T-1}_{k=0} \sum_{i}c_i\rho^{k} \left\|\hat{\delta}^{i}_{t-k} \right\|\\
& \leq \sum^{T-1}_{k=0}\max\limits_{s}c_s\rho^{k} \sum_{i} \left\|\hat{\delta}^{i}_{t-k} \right\| \leq \lambda \max\limits_{0\leq k\leq T-1} \sum_{i}\left\|\hat{\delta}^{i}_{t-k}\right\|.
\end{align}
After some tedious manipulations of indices and using \assref{asp:propspeed} and \eqref{eq:locdelRM}, it can be established that \eqref{eq:im1} and \eqref{eq:im2} assure the bounds:
\begin{align}\label{eq:recfeas}
&J_{t} \leq \sum\limits^{N}_{i=1} m^{i}_1  &K_{t} \leq \sum\limits^{N}_{i=1} \bar{d}_i m^{i}_2
\end{align}
\end{proof}

\noindent Analogously to the previous simpler case \algoref{algo:RAsls}, we can combine polytopes of consistent parameters as defined in \defref{def:Ptj} with the robustness results \thmref{thm:distsls} and obtain the distributed localized robust adaptive control (DLAR) algorithm \algoref{algo:DLAR}. Similar to \algoref{algo:RAsls}, every subsystem is constructing consistent polytopes $\Pt{t}^j$ for their local parameters (line 8-11) and optimizing for robust $R^j_t$ and $M^j_t$ that can satisfy the conditions of \thmref{thm:distsls}. Depending on the local robustness margin $\lambda^i_t$, the optimization objective is switched between $f_\lambda$ and $f_{C^i,D^i}$ (Line 13-15). After the local controllers have been computed, the control actions and local constraints are broadcasted to the local region of subsystems (Line 18-19).

\begin{align}
\label{eq:distsls} \min_{R^{i}_t,M^{i}_t,c^i_t,\lambda^i_t}&\quad \quad f(R^{i}_t,M^{i}_t,\lambda^i_t) \\ 
\text{s.t.}&\forall A^{\up{q}{p}}\in \c{M}^{\up{q}{p}}_A(\Ex{\Pt{t}^i}), B^{p} \in \c{M}^{p}_B(\Ex{\Pt{t}^i}): \notag\\
& \notag \text{holds }  \eqref{eq:irob}, \eqref{eq:ilam}, \eqref{eq:im1}, \eqref{eq:im2}, \eqref{eq:locdelRM},\eqref{eq:locregRM} \\
& R^{\up{i}{i}}_t(1) = I\text{ and }R^{\up{j}{i}}_t(1) = 0\text{ for }i \neq j
\end{align}



\begin{align}
\label{eq:flam}f_\lambda(R,M,\lambda) &= \lambda \\
\label{eq:fCD}f_{C^i,D^i}(R^i,M^i,\lambda) &= \sum\limits^{T}_{k=1}\left\|C^{i} R^{i}(k) + D^{i}M^{i}(k) \right\|
\end{align}
\begin{algorithm}
\DontPrintSemicolon
\SetKwInOut{Input}{Input}
\SetAlgoLined
\Input{$\Pt{0}^{i}$, $C^i$, $D^{i}$, $m_{1}$, $m_{2}$, $\rho$, $T$, $\mathcal{A}^{\up{j}{i}}_s$, $\mathcal{B}^{i}_s$, $\lambda^*$}
\BlankLine
\For{ subsystem $j = 1:N$}{
	$R^j_0$, $M^j_0$ $\leftarrow$ solve \eqref{eq:distsls} with $\Pt{0}^{j}$\; 
	$\hat{\delta}^j_0 \leftarrow x^j_0$\;
	apply $u^j_0 \leftarrow M^j_0(1)\hat{\delta}^j_0$\;
}
\For{ t = 1,2,\dots}{
	\For{subsystem $i = 1:N$}{		
		$\hat{\mathcal{C}}^{i \leftarrow j}_t \leftarrow \mathcal{C}^{j}_{t-d_{\up{i}{j}}}$ \eqref{eq:Pi_t}\; 
		receive $\hat{R}^{i \leftarrow j}_t$,$\hat{M}^{i \leftarrow j}_t$ $\leftarrow$ $R^{i \leftarrow j}_{t-d_{\up{i}{j}}}$, $M^{i \leftarrow j}_{t-d_{\up{i}{j}}}$ \;
		compute $\mathcal{C}^i_t$ from \eqref{eq:Ctcons}\; 
		update $\Pt{t}^{i}$ $\leftarrow$ $\Pt{t-1}^{i}\cap \bigcap_{j \in \c{R}(i)} \hat{\mathcal{C}}^{i \leftarrow j}_t$ \eqref{eq:Pi_t}\;
		$\lambda^i_t$,$R^i_t$, $M^i_t$ $\leftarrow$ solve \eqref{eq:distsls} with $f_\lambda$ \eqref{eq:flam}  \;
		\If{$\lambda^i_t \leq \lambda^*$}{
		$R^i_t$, $M^i_t$ $\leftarrow$ solve \eqref{eq:distsls} with $f_{C^i,D^i}$ \eqref{eq:fCD} such that $\lambda^i_t \leq \lambda^*$\;
		}
		compute $\hat{\delta}^i_t$,$u^{i}_t$ $\leftarrow$ \eqref{eq:deltasls}, \eqref{eq:dlocsls} \;	
		apply $u^{i}_t$\;
		broadcast $R^{n \leftarrow i}_{t}$, $M^{n \leftarrow i}_{t}$ to all $n \in \mathcal{L}(i)$ \;
		broadcast constraints $\mathcal{C}^i_t$ to all $n \in \mathcal{S}(i)$\;
	}
}
 \caption{A DLAR Control Scheme with SLS}
 \label{algo:DLAR}
\end{algorithm}

\noindent Well-definition, feasibility and robustness of \algoref{algo:DLAR} are derived analogously to our discussion in \secref{subsec:singlesys} and we will cover these aspects only in summary for the interest of space.
It can be verified that the problem \eqref{eq:distsls} in \algoref{algo:DLAR} at time $t$ only depends on information that is available to subsystem $i$ and feasibility for all times follows by \propref{prop:recfeasdist}:
\begin{prop}[Recursive Feasibility]\label{prop:recfeasdist}
If $R^{\up{j}{i}}_{t'}$, $M^{\up{j}{i}}_{t'}$, $c^i_{t'}$ and $\lambda^i_{t'}$ are feasible w.r.t. the constraints of  \eqref{eq:distsls} at time $t'$, then they are guaranteed to satisfy the same constraints for all future times $t\geq t'$.
\end{prop}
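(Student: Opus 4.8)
The plan is to mirror the single-system recursive feasibility argument of \propref{prop:recfeas}: the claim is that freezing the decision variables, i.e. taking $R^{\up{j}{i}}_{t}=R^{\up{j}{i}}_{t'}$, $M^{\up{j}{i}}_{t}=M^{\up{j}{i}}_{t'}$, $c^i_t=c^i_{t'}$ and $\lambda^i_t=\lambda^i_{t'}$ for all $t\geq t'$, yields a feasible point of \eqref{eq:distsls} at every future time. The first step is to record the monotonicity of the consistent polytopes: from the update rule \eqref{eq:Pi_t}, each $\Pt{t}^i=\Pt{t-1}^i\cap\bigcap_{j\in\c{R}(i)}\hat{\c{C}}^{\up{i}{j}}_t$ is obtained by intersecting with additional half-spaces, so $\Pt{t}^i\subseteq\Pt{t'}^i$ whenever $t\geq t'$. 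I would then sort the constraints of \eqref{eq:distsls} into three groups and treat each separately.

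The \emph{robustness} constraint \eqref{eq:irob} depends on the current $R^{\up{j}{i}}_t,M^{\up{j}{i}}_t,c^i_t$ and on $[A,B]$ ranging over the consistent-matrix sets $\c{M}_A(\Ex{\Pt{t}^i})$, $\c{M}_B(\Ex{\Pt{t}^i})$. Since $[A,B]\mapsto\|\sum_{j\in\c{L}(i)}\Delta^j_k(A,B,R,M)\|$ is convex for fixed $R,M$, the same convexity argument as \propref{prop:polycheck} converts the extreme-point constraint into a constraint over the whole polytope $\Pt{t}^i$; by the monotonicity above this polytope only shrinks, so a frozen controller feasible over $\Pt{t'}^i$ stays feasible over $\Pt{t}^i\subseteq\Pt{t'}^i$. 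Constraint \eqref{eq:ilam} does not reference the data at all and is trivially preserved once $c^i$ is frozen, and the purely structural conditions — the localization/delay patterns \eqref{eq:locregRM}, \eqref{eq:locdelRM} and the normalization $R^{\up{i}{i}}_t(1)=I$, $R^{\up{j}{i}}_t(1)=0$ — are time-invariant conditions on the current controller, hence preserved by freezing.

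The substantive step is the pair of \emph{adaptation} constraints \eqref{eq:im1} and \eqref{eq:im2}, the distributed analogue of the single-system margin \eqref{eq:StableAdaptation}, both built from temporal differences of the controller that freezing is meant to annihilate. For the one-step constraint \eqref{eq:im2} the difference $R^{\up{j}{i}}_t-R^{\up{j}{i}}_{t-1}$ collapses to zero as soon as $t>t'$, so it holds with a vanishing left-hand side. Constraint \eqref{eq:im1}, however, compares $R^{\up{j}{i}}_t$ against its values $R^{\up{j}{i}}_{t+h-d_{\up{j}{i}}}$ across the entire communication-delay window $\{t-d_{\up{j}{i}},\dots,t-1\}$, and freezing zeroes these differences only once every reference time lies at or after $t'$, i.e. once $t\geq t'+\bar{d}_i$.

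I expect the main obstacle to be precisely this transient window $t'\leq t<t'+\bar{d}_i$, in which \eqref{eq:im1} still sees the pre-$t'$ history of the controller so the differences need not vanish. To close the argument I would either absorb this finite transient into the statement, proving feasibility for $t\geq t'+\bar{d}_i$ while letting the intermediate steps inherit feasibility from the per-step optimization that generated them, or strengthen the freezing witness so that over the delay window it reproduces the history against which \eqref{eq:im1} was last verified. In both cases the difficulty stems entirely from the delays $d_{\up{j}{i}}$: in the delay-free single-system setting the window has length one and the issue disappears, recovering \propref{prop:recfeas} verbatim.
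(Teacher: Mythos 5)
Your sorting of the constraints is sound and most of it goes through exactly as you say: the monotonicity $\Pt{t}^i\subseteq\Pt{t'}^i$ combined with the convexity argument of \propref{prop:polycheck} handles \eqref{eq:irob}, freezing $c^i$ and $\lambda^i$ handles \eqref{eq:ilam}, the structural conditions \eqref{eq:locregRM}, \eqref{eq:locdelRM} and the normalization of $R^{\up{j}{i}}_t(1)$ are time-invariant, and the one-step difference in \eqref{eq:im2} vanishes for $t>t'$. You have also correctly isolated \eqref{eq:im1} as the only delicate constraint. But your treatment of it leaves a genuine gap, and neither of your proposed repairs closes it: weakening the claim to $t\geq t'+\bar{d}_i$ changes the proposition, and ``letting the intermediate steps inherit feasibility from the per-step optimization that generated them'' is circular, since recursive feasibility is precisely what is needed to guarantee those per-step problems admit solutions at all.

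The transient window is in fact already covered by feasibility at time $t'$, and the mechanism is the quantifier over $h$ in \eqref{eq:im1}. Fix the frozen trajectory $R^{\up{j}{i}}_s=R^{\up{j}{i}}_{t'}$, $M^{\up{j}{i}}_s=M^{\up{j}{i}}_{t'}$ for $s\geq t'$, take $t>t'$, a parameter $h$, and set $h_0=h+(t-t')$. For a subsystem $j$ with $d_{\up{j}{i}}\leq h_0$, the reference time $t+h-d_{\up{j}{i}}\geq t'$ lies inside the frozen window, so that term of \eqref{eq:im1} vanishes. For $j$ with $d_{\up{j}{i}}\geq h_0+1$, one has $t+h=t'+h_0$, hence $R^{\up{j}{i}}_{t+h-d_{\up{j}{i}}}=R^{\up{j}{i}}_{t'+h_0-d_{\up{j}{i}}}$ and $\hat{\delta}^i_{t+h+1-k}=\hat{\delta}^i_{t'+h_0+1-k}$, so the term coincides exactly with the corresponding term of the time-$t'$ instance of \eqref{eq:im1} at parameter $h_0$; and if $h_0>\bar{d}_i-1$ no such $j$ exists and the sum is empty. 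The left-hand side at time $t$, parameter $h$, is therefore a sub-sum of a constraint already verified at time $t'$ and is bounded by $m^i_1$. This re-indexing is the reason \eqref{eq:im1} is imposed for every $h\in\{0,\dots,\bar{d}_i-1\}$: the larger values of $h$ pre-verify the transient. For comparison, the paper gives no explicit proof of \propref{prop:recfeasdist}, deferring to the analogy with \propref{prop:recfeas} and to the remark that \assref{asp:propspeed} makes \eqref{eq:im1} and \eqref{eq:im2} recursively feasible; the argument above is the missing content of that remark, and it is the step your proposal stops short of.
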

\begin{rem}
\assref{asp:propspeed} plays a crucial role for recursive feasibility. In fact, \assref{asp:propspeed} implies \eqref{eq:dropdelta}, which enabled us to derive the recursively feasible conditions \eqref{eq:im1} and \eqref{eq:im2} that guarantee boundedness of the terms \eqref{eq:Jt} and \eqref{eq:Kt}.
\end{rem}
\begin{prop}[Feasibility implies Robustness]\label{prop:feasrobdist}
If $R^{\up{j}{i}}_{t'}$, $M^{\up{j}{i}}_{t'}$, $c^i_{t'}$ and $\lambda^i_{t'}$ are feasible w.r.t. the constraints of  \eqref{eq:distsls} at time $t'$,
then they satisfy condition \eqref{eq:irob}, \eqref{eq:ilam}, \eqref{eq:im1}, \eqref{eq:im2} of \thmref{thm:distsls} for the closed loop at time $t'$.
\end{prop}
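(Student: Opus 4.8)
The plan is to mirror the argument of \propref{prop:feasrob} from the single-system case, now accounting for the richer constraint structure of \eqref{eq:distsls}. The constraints of \eqref{eq:distsls} split into two groups. The localization and delay constraints \eqref{eq:locdelRM}, \eqref{eq:locregRM}, together with $R^{\up{i}{i}}_{t'}(1)=I$ and $R^{\up{j}{i}}_{t'}(1)=0$, are imposed directly on the decision variables $R^{\up{j}{i}}_{t'}$, $M^{\up{j}{i}}_{t'}$ and do not involve the system parameters, so any feasible point satisfies them verbatim for the closed loop. The same holds for condition \eqref{eq:im2}, whose left-hand side depends only on the $R$-matrices and the internal states $\hat{\delta}^i$. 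The only parameter-dependent requirements are \eqref{eq:irob}, \eqref{eq:ilam} and \eqref{eq:im1}, which \eqref{eq:distsls} enforces merely over the extreme points $\Ex{\Pt{t'}^i}$. The substance of the proof is therefore to (i) lift these three conditions from the extreme points to the full polytope, and (ii) argue that the true parameter lies inside that polytope.

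For step (i) I would reuse the convexity reasoning behind \propref{prop:polycheck}. By the definition \eqref{eq:Delij}, each map $(A,B)\mapsto \Delta^{j}_{k}(A,B,R,M)$ is affine, so the left-hand sides of \eqref{eq:irob} and \eqref{eq:im1} — finite sums of norms of quantities that are affine in $(A,B)$, and hence affine in $\alpha$ through \eqref{eq:AvuBu} — are convex functions of $\alpha$. A convex function on the polytope $\Pt{t'}^i$ attains its maximum at an extreme point, so enforcing \eqref{eq:irob} and \eqref{eq:im1} at every $\alpha\in\Ex{\Pt{t'}^i}$ is equivalent to enforcing them at every $\alpha\in\Pt{t'}^i$. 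Once $c^i_{t'}$ has been chosen so that \eqref{eq:irob} holds over the whole polytope, condition \eqref{eq:ilam} no longer depends on $\alpha$ and therefore also holds throughout $\Pt{t'}^i$.

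For step (ii), the true parameter $\alpha$ satisfies every inferred constraint \eqref{eq:Ctcons}: this is exactly the consistency inequality \eqref{eq:s2}, which holds because the true dynamics \eqref{eq:sysdist} generate a disturbance obeying the bound \eqref{eq:previnfo}. Consequently $\alpha$ lies in $\Pt{0}^i$ and in every intersected constraint set, so $\alpha\in\Pt{t'}^i$ for all $t'$; equivalently the true blocks $A^{\up{j}{n}}$ and $B^j$ belong to $\c{M}^{\up{j}{n}}_A(\Pt{t'}^i)$ and $\c{M}^{j}_B(\Pt{t'}^i)$. Combining with step (i), the conditions \eqref{eq:irob}, \eqref{eq:ilam}, \eqref{eq:im1} (and the already-verified \eqref{eq:im2}) hold when evaluated at the true parameters, which is precisely what \thmref{thm:distsls} requires of the closed loop at time $t'$.

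I expect the main obstacle to be bookkeeping rather than conceptual: one must confirm that every parameter block appearing in subsystem $i$'s conditions — the $A^{\up{j}{n}}$, $B^j$ with $j\in\c{L}(i)$ and $n\in\c{N}(j)$ — is genuinely constrained by the locally available polytope $\Pt{t'}^i$, which ties back to the communication structure \defref{def:Ri} and to \assref{asp:propspeed}. This is not a real difficulty for correctness, since global consistency of $\alpha$ forces $\alpha\in\Pt{t'}^i$ no matter how much information subsystem $i$ has gathered; additional information only shrinks $\Pt{t'}^i$ and tightens the extreme-point check. Thus the convexity lift in step (i), a direct adaptation of \propref{prop:polycheck}, is the only step carrying real content.
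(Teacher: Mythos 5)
Your argument is correct and is exactly the route the paper intends: the paper gives no explicit proof of \propref{prop:feasrobdist}, stating only that it is ``derived analogously'' to \propref{prop:feasrob}, whose proof is precisely your combination of the convexity lift from extreme points to the full polytope (\propref{prop:polycheck}) and the observation that the true parameter lies in every consistent polytope $\Pt{t'}^i$. Your added bookkeeping --- separating the parameter-independent constraints \eqref{eq:locdelRM}, \eqref{eq:locregRM}, \eqref{eq:im2} from the parameter-dependent ones \eqref{eq:irob}, \eqref{eq:ilam}, \eqref{eq:im1} --- is a faithful and slightly more careful elaboration of the same approach.
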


\noindent \propref{prop:recfeas} and \propref{prop:feasrob} give immediate results for robust stability of the closed loop:
\begin{coro}\label{coro:lambdadecdist}
$\lambda^{i}_{t} \leq \max\left\{\lambda^{i}_{t-1},\lambda^* \right\} $
\end{coro}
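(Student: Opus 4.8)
The plan is to mirror the argument for the single–system corollary \corref{coro:lambdadec}, now using the distributed polytope update and the distributed recursive–feasibility guarantee. First I would record the monotonicity of the local uncertainty sets: line 11 of \algoref{algo:DLAR} sets $\Pt{t}^{i} = \Pt{t-1}^{i}\cap \bigcap_{j\in\c{R}(i)}\hat{\c{C}}^{\up{i}{j}}_t$, so $\Pt{t}^{i}\subseteq\Pt{t-1}^{i}$ and consequently the consistent matrix sets $\c{M}_A(\Ex{\Pt{t}^{i}})$, $\c{M}_B(\Ex{\Pt{t}^{i}})$ only shrink. By the convexity argument underlying \propref{prop:polycheck}, applied to the constraints \eqref{eq:irob}--\eqref{eq:ilam} which are convex in $(A,B)$ for fixed $R,M$, any controller that satisfies the robustness constraints over $\Ex{\Pt{t-1}^{i}}$ automatically satisfies them over $\Ex{\Pt{t}^{i}}$. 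Hence the robustness part of the feasible region of \eqref{eq:distsls} can only grow as $t$ increases.

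Next I would exhibit an explicit feasible warm start at time $t$ whose robustness margin is at most $\max\{\lambda^i_{t-1},\lambda^*\}$. Let $(R^i_{t-1},M^i_{t-1})$ be the controller actually applied at time $t-1$. If $\lambda^i_{t-1}\le\lambda^*$ then the performance re–solve in line 14 was executed under the explicit constraint $\lambda^i_{t-1}\le\lambda^*$, so the applied controller has margin at most $\lambda^*$; otherwise line 14 is skipped and its margin equals $\lambda^i_{t-1}$. In either case the applied controller carries a margin bounded by $\max\{\lambda^i_{t-1},\lambda^*\}$. Setting $R^i_t=R^i_{t-1}$ and $M^i_t=M^i_{t-1}$ zeroes the increments $R^{\up{j}{i}}_t-R^{\up{j}{i}}_{t-1}$, so by \propref{prop:recfeasdist} this choice is feasible for \eqref{eq:distsls} at time $t$, and by the monotonicity above it meets \eqref{eq:irob}--\eqref{eq:ilam} over the shrunken polytope with the same margin.

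Finally, since line 12 of \algoref{algo:DLAR} solves \eqref{eq:distsls} with the objective $f_\lambda$, i.e. it minimizes the robustness margin over the whole feasible set, the value it returns cannot exceed the margin of any particular feasible point. Feeding in the warm start from the previous paragraph yields $\lambda^i_t\le\max\{\lambda^i_{t-1},\lambda^*\}$, and since line 14 only re–optimizes $R^i_t,M^i_t$ without enlarging the recorded $\lambda^i_t$, the bound survives to the end of the iteration, which is the claim.

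The main obstacle I expect is the bookkeeping in the second paragraph: one must verify that holding $(R^i,M^i)$ fixed across a single step genuinely preserves feasibility of the distributed adaptation constraints \eqref{eq:im1}--\eqref{eq:im2}, not merely \eqref{eq:im2}. This is where \assref{asp:propspeed} enters through \eqref{eq:dropdelta}, exactly as in the sketch of \thmref{thm:distsls} and \propref{prop:recfeasdist}, and I would lean on \propref{prop:recfeasdist} rather than re–derive it. A secondary subtlety is keeping the roles of the \emph{recorded} margin $\lambda^i_t$ (from line 12) and the \emph{applied} controller (possibly from line 14) straight, since the corollary bounds the former while recursive feasibility is driven by the latter.
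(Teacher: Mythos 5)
Your argument is correct and is essentially the paper's own: the paper proves the distributed corollary only by analogy to \corref{coro:lambdadec}, whose proof rests on exactly the three ingredients you use --- monotonicity $\Pt{t}^{i}\subseteq\Pt{t-1}^{i}$, recursive feasibility of the previously applied controller (\propref{prop:recfeasdist}), and the fact that line 12 minimizes $\lambda$ over the feasible set. Your added care in distinguishing the recorded margin from the applied controller, and in deferring the \eqref{eq:im1}--\eqref{eq:im2} bookkeeping to \propref{prop:recfeasdist}, matches the level of rigor the paper itself relies on.
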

\begin{coro}\label{coro:robmargindist}
 $\max_i \lambda^{i}_{t'}$ is a robustness margin for the closed loop system for all $t \geq t'$ and the corresponding bounds of \thmref{thm:slst} apply.
\end{coro}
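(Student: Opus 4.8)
The plan is to mirror the argument that establishes the single-system analog \corref{coro:robmargin} in \secref{subsec:singlesys}, now invoking the distributed feasibility propositions together with the distributed robustness theorem \thmref{thm:distsls}. Fix an arbitrary $t'$ and set $\lambda = \max_i \lambda^i_{t'}$. The first ingredient is that the controllers actually applied by \algoref{algo:DLAR} at every time $t \geq t'$ are feasible for \eqref{eq:distsls}: the step that solves \eqref{eq:distsls} with $f_\lambda$ (and, when the threshold is met, the re-solve with $f_{C^i,D^i}$) returns a feasible point of \eqref{eq:distsls} at each step, and \propref{prop:recfeasdist} guarantees that feasibility, once attained at $t'$, is preserved for all $t \geq t'$. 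By \propref{prop:feasrobdist}, feasibility at a given time $t$ implies that the realized closed-loop quantities satisfy conditions \eqref{eq:irob}, \eqref{eq:ilam}, \eqref{eq:im1} and \eqref{eq:im2} of \thmref{thm:distsls} at that time, with the realized local margins $\lambda^i_t$.

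The second ingredient is a uniform-in-time bound on these margins. By \corref{coro:lambdadecdist} we have $\lambda^i_t \leq \max\{\lambda^i_{t-1}, \lambda^*\}$, and a straightforward induction on $t$ then yields $\lambda^i_t \leq \max\{\lambda^i_{t'}, \lambda^*\}$ for every $i$ and all $t \geq t'$. Hence, provided $\lambda = \max_i \lambda^i_{t'} \geq \lambda^*$ (the regime of interest, in which the desired robustness level has already been reached), we obtain $\lambda^i_t \leq \lambda$ for all $i$ and all $t \geq t'$; equivalently, \eqref{eq:ilam} holds at every such $t$ with the single common bound $\lambda$. If instead $\max_i \lambda^i_{t'} < \lambda^*$, the same argument applies verbatim after replacing $\lambda$ by $\max\{\max_i \lambda^i_{t'}, \lambda^*\}$.

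With feasibility and the uniform margin in hand, \thmref{thm:distsls} applies on the horizon $t \geq t'$: under \assref{asp:propspeed}, the conditions \eqref{eq:irob}--\eqref{eq:im2} holding for all $t \geq t'$ with common bound $\lambda$ yield the scalar recursion
\begin{align*}
\sum_j \left\|\hat{\delta}^j_{t}\right\| \leq \lambda \max_{1 \leq k \leq T} \sum_j \left\|\hat{\delta}^j_{t-k}\right\| + \sum_{i=1}^{N}\left(m^i_1 + \bar{d}_i m^i_2 + \hat{w}^i_t\right)
\end{align*}
for the aggregate effective disturbance. Treating $t'$ as the initial time and the stored history $\hat{\delta}^i_{t'-1}, \dots, \hat{\delta}^i_{t'-T}$ as the effective initial condition, \lemref{lem:scalar} then bounds $\sum_j \|\hat{\delta}^j_t\|$ by the geometric-plus-forced-term expression of \thmref{thm:slst}, and the SLS implementation relations \eqref{eq:dlocsls}, \eqref{eq:deltasls} propagate this to the corresponding bounds on $\|x_t\|$ and $\|u_t\|$. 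This shows $\lambda = \max_i \lambda^i_{t'}$ is a robustness margin valid for all $t \geq t'$.

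The main obstacle is the second step: guaranteeing that the time-varying local margins never exceed the claimed common value $\lambda$. This is delicate precisely because the performance-optimization branch (the re-solve with $f_{C^i,D^i}$ once $\lambda^i_t \leq \lambda^*$) is allowed to trade robustness for performance up to the threshold $\lambda^*$, so the naive monotonicity $\lambda^i_t \leq \lambda^i_{t'}$ can fail; the correct uniform bound is $\max\{\lambda^i_{t'}, \lambda^*\}$, which is why the clean statement holds in the regime $\max_i \lambda^i_{t'} \geq \lambda^*$. The remaining care is bookkeeping: verifying that the cross-time conditions \eqref{eq:im1}, \eqref{eq:im2} --- which couple $R^{\up{j}{i}}_t$ at consecutive and delay-shifted times --- are exactly the constraints preserved by \propref{prop:recfeasdist}, so that \thmref{thm:distsls} may legitimately be invoked at every $t \geq t'$ rather than only pointwise.
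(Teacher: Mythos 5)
Your argument is correct and follows essentially the same route the paper intends: the corollary is stated as an immediate consequence of \propref{prop:recfeasdist}, \propref{prop:feasrobdist} and \corref{coro:lambdadecdist} feeding into \thmref{thm:distsls} and \lemref{lem:scalar}, which is exactly what you spell out. Your added caveat --- that the uniform bound is really $\max\{\max_i \lambda^i_{t'}, \lambda^*\}$ because the performance re-solve may push a local margin up to $\lambda^*$ --- is a legitimate refinement of the statement that the paper glosses over.
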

\begin{coro}\label{coro:robstabdist}
If $\max_i \lambda^{i}_{0} < 1 $ from (line 2) of \algoref{algo:DLAR}, then the closed loop system is stable for all time.
\end{coro}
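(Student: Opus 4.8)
The plan is to mirror the single-system argument behind \corref{coro:robstab} and reduce everything to the scalar stability bound \lemref{lem:scalar}, applied to the aggregated effective disturbance $z_t := \sum_j \left\|\hat{\delta}^j_t\right\|$. All the real work is already packaged into the supporting corollaries, so the proof is essentially an assembly step: I must show that the hypothesis $\max_i \lambda^i_0 < 1$ forces the closed loop to satisfy, uniformly in $t$, the hypotheses of \thmref{thm:distsls} with a margin strictly below one, after which boundedness of $z_t$ follows and is pushed through to the physical signals.

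First I would fix $t'=0$ and invoke \corref{coro:robmargindist}, which states that $\max_i \lambda^i_0$ is a robustness margin for the closed loop for every $t \geq 0$. Its justification chains together recursive feasibility \propref{prop:recfeasdist} (the solution computed at line~2 remains feasible for all future times, since $\Pt{t}^i \subset \Pt{t-1}^i$ only shrinks the uncertainty and freezing $R^i_t$ removes the adaptation constraints) and \propref{prop:feasrobdist} (feasibility of the local programs \eqref{eq:distsls} at the extreme points of $\Pt{t}^i$ certifies, via \propref{prop:polycheck} and the fact that the true parameters lie in $\Pt{t}^i$, that the actual closed loop satisfies \eqref{eq:irob}, \eqref{eq:ilam}, \eqref{eq:im1}, \eqref{eq:im2}); \corref{coro:lambdadecdist} ensures the applied margin never grows past $\max\{\max_i\lambda^i_0,\lambda^*\}$, which under the standing choice $\lambda^*<1$ stays below one. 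With $\lambda := \max_i \lambda^i_0 < 1$ thus valid at all times, \thmref{thm:distsls} yields the recursion $z_t \leq \lambda \max_{1\leq k\leq T} z_{t-k} + \hat{\eta}_{\mathrm{tot}}$ with $\hat{\eta}_{\mathrm{tot}}$ a uniform bound on $\sum_i \left( m^i_1 + \bar{d}_i m^i_2 + \left\|\hat{w}^i_t\right\| \right)$, and \lemref{lem:scalar} in its $\lambda<1$ branch \eqref{eq:zboundstab} gives $z_t \leq (\sqrt[T]{\lambda})^t z_0 + \tfrac{1-\lambda^t}{1-\lambda}\hat{\eta}_{\mathrm{tot}}$, a uniformly bounded sequence converging to $\hat{\eta}_{\mathrm{tot}}/(1-\lambda)$.

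The last step is to translate boundedness of the aggregated effective disturbance into boundedness of $x_t$ and $u_t$. This is exactly the role played in the single-system case by \eqref{eq:xdef}: reading off $x^j_t$ and $u^j_t$ from \eqref{eq:deltasls} and \eqref{eq:dlocsls} expresses each physical signal as a finite, time-windowed linear combination of the $\hat{\delta}^i_{t-k}$, $0\le k\le T-1$, with coefficients $R^{\up{j}{i}}_t, M^{\up{j}{i}}_t$ that are uniformly bounded matrix sequences (guaranteed by the standing boundedness hypothesis on the decision variables and by the localization constraint \eqref{eq:locregRM}, which keeps only finitely many nonzero blocks). Hence $\left\|x_t\right\|$ and $\left\|u_t\right\|$ inherit the uniform bound on $z_t$, establishing stability.

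The main obstacle I anticipate is not the scalar bookkeeping but the uniform-in-time validity of a single margin $\lambda<1$ despite the controllers being re-optimized at every step. I would stress that this is precisely what \corref{coro:lambdadecdist} and \corref{coro:robmargindist} buy us, and that \assref{asp:propspeed} is doing the quiet heavy lifting underneath: it forces \eqref{eq:dropdelta}, which is what makes the conditions \eqref{eq:im1}, \eqref{eq:im2} recursively feasible in the first place, so that falling back on the initial feasible point never violates stability as the delayed information propagates. Everything else is a direct corollary of \thmref{thm:distsls} and \lemref{lem:scalar}.
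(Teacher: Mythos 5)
Your proof is correct and follows exactly the route the paper intends: the paper leaves this corollary without an explicit proof, treating it as an immediate assembly of \propref{prop:recfeasdist}, \propref{prop:feasrobdist}, \corref{coro:lambdadecdist}, \corref{coro:robmargindist}, \thmref{thm:distsls} and \lemref{lem:scalar}, which is precisely the chain you construct. Your explicit observation that $\lambda^* < 1$ is needed so that the performance re-solve step cannot push the margin back above one is a small but worthwhile point of care that the paper glosses over.
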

 
\noindent In addition, the design constraint \eqref{eq:locdelRM} in \conref{con:loc} constrains the number of decision variables to the size of $\c{L}(i)$ and limits the complexity of the subproblem that every subsystems solves.
\begin{coro}[Local Models]\label{coro:locmod}
The conditions \eqref{eq:im1}, \eqref{eq:im2}, \eqref{eq:irob} and \eqref{eq:ilam} w.r.t. the subsytem $i$ only depend on parameters $A^{\up{v}{u}}$ and $B^{u}$ for which $u\in \c{L}(i)$.
\end{coro}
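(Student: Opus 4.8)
The statement is a bookkeeping claim: once the localization pattern \eqref{eq:locregRM} of \conref{con:loc} is imposed, the robustness conditions collapse onto only those system matrices whose source index lies in $\c{L}(i)$. The plan is to reduce all four conditions to the single building block $\Delta^{j}_{k}$ of \eqref{eq:Delij} and then track exactly how the parameters $A^{\up{v}{u}}$ and $B^{u}$ enter it.

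First I would observe that \eqref{eq:im2} contains no system matrices at all: it is built purely from the decision variables $R^{\up{j}{i}}_t$ and the past internal states $\hat{\delta}^{i}$, so the claim is immediate for it. The remaining three conditions \eqref{eq:irob}, \eqref{eq:ilam}, \eqref{eq:im1} are all assembled from $\Delta^{j}_{k}(A,B,\cdot,\cdot)$, so it suffices to understand the parameter dependence of a single $\Delta^{j}_{k}$.

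Next I would inspect the definition \eqref{eq:Delij}: in $\Delta^{j}_{k}$ the matrix $A^{\up{j}{n}}$ appears only inside the product $A^{\up{j}{n}}R^{\up{n}{i}}_t(k)$ and $B^{j}$ only inside $B^{j}M^{\up{j}{i}}_t(k)$. In the naming $A^{\up{v}{u}}$, $B^{u}$ this means every system matrix is paired with a decision variable carrying the source index $u$, namely $R^{\up{u}{i}}$ for $A^{\up{v}{u}}$ (with $u=n$) and $M^{\up{u}{i}}$ for $B^{u}$ (with $u=j$). The localization constraint \eqref{eq:locregRM} then forces $R^{\up{u}{i}}_t(k)=0$ and $M^{\up{u}{i}}_t(k)=0$ for every $t$ and $k$ whenever $u \notin \c{L}(i)$; hence any term containing $A^{\up{v}{u}}$ or $B^{u}$ with $u \notin \c{L}(i)$ is multiplied by zero and drops out, and the same holds for the time-differences of $R$ and $M$ that appear in \eqref{eq:im1}.

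The one point that needs care, and which I expect to be the only real subtlety, is that the outer summation index $j$ in \eqref{eq:im1} ranges over $\{j : d_{\up{j}{i}}\ge h+1\}$ rather than over $\c{L}(i)$, and that the target index $v=j$ of a surviving $A^{\up{j}{n}}$ term need not lie in $\c{L}(i)$. I would emphasize that the corollary only asserts membership of the \emph{source} index $u$ in $\c{L}(i)$: a contribution from some $j \notin \c{L}(i)$ can survive only through the middle term $A^{\up{j}{n}}R^{\up{n}{i}}_t(k)$, which by the argument above requires $n=u \in \c{L}(i)$, so the retained parameter is $A^{\up{v}{u}}$ with $u \in \c{L}(i)$ exactly as claimed. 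Collecting these observations over all $k$ and all summation indices establishes that \eqref{eq:im1}, \eqref{eq:im2}, \eqref{eq:irob} and \eqref{eq:ilam} depend only on $A^{\up{v}{u}}$ and $B^{u}$ with $u \in \c{L}(i)$.
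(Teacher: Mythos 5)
Your argument is correct and matches the paper's (one-line) proof in spirit: both hinge on inspecting the definition \eqref{eq:Delij} and observing that each system matrix enters only through a product with a decision variable carrying the source index $u$ (namely $A^{\up{j}{n}}R^{\up{n}{i}}_t(k)$ and $B^{j}M^{\up{j}{i}}_t(k)$), which vanishes whenever $u \notin \c{L}(i)$, with your handling of the outer index $j \notin \c{L}(i)$ in \eqref{eq:im1} being exactly the right caveat. The one discrepancy is that you attribute the vanishing to the localization constraint \eqref{eq:locregRM} of \conref{con:loc}, whereas the paper's proof cites \assref{asp:propspeed}; your attribution is the more precise one, since \assref{asp:propspeed} and \eqref{eq:locdelRM} govern temporal support (via \eqref{eq:dropdelta}) rather than the spatial support that the corollary concerns.
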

\begin{proof}
This follows by the definition \eqref{eq:Delij} and using assumption \assref{asp:propspeed}.
\end{proof}
\noindent  \corref{coro:locmod} shows that complexity of the subproblems only grows with the size of the local regions $\c{R}(i)$ and $\c{L}(i)$. This addresses the design constraint \conref{con:loccomp} and shows that \algoref{algo:DLAR} allows for a scalable implementation even for large number $N$ of subsystems.

\section{Simulation}
\noindent The algorithm \algoref{algo:DLAR} is applied to the exemplary control problem with implementation constraints discussed in \secref{sec:introex}. For this simulation we picked the true parameters $\alpha_1 = 0.3$, $\alpha_2 = 0.6$, $\alpha_3 = 0.2$ which produce an unstable open loop system ($\max_i|\lambda_i(A)| = 1.05$). 
\figref{fig:poly} and \figref{fig:truesys} show simulation results of the presented adaptive controller scheme with $\rho = 0.7$, $T = 8$, $x_0 = [0,3,3,3,0]^T$, with respect to two different cases of initial available information. In \figref{fig:poly} the controller has only knowledge of the initial entry-wise bounds on $\alpha$ described in \secref{sec:introex}, while in \figref{fig:truesys} the controller starts off with perfect knowledge of $\alpha$. Furthermore, for the adaptive case, \figref{fig:margindist} summarizes the effective disturbances $\hat{\delta}^j_t$, the environment disturbances $w^j_t$ and individually computed margins  $\lambda^j_t$ for every subsystem. In addition, the quantity $\mu_t$ in \figref{fig:margindist} computes the true robustness margin $\mu_t = \sum_k \left\|\Delta(k)\right\|_1$ of the closed loop adaptive controller w.r.t. to true plant\footnote{Note, that this information is not available to the controllers and is only displayed to show that the controller achieves robust stability.}. Although the initial uncertainty provides a large robustness margin ($\max_i \lambda^i_0 = 4$), the plot of $\mu_t$ in \figref{fig:margindist} shows that the controller learns enough by time-step $t=20$ to render the closed loop stable.

Moreover, even though the controller in \figref{fig:truesys} has perfect knowledge of the parameters, its computed robustness margin is $\sum_k \left\| \Delta (k)\right\|_1 = 0.33$, which tells us that even in presence of full system knowledge, the communication constraints only allow for approximate localization. Putting this in relation to the simulation results in \figref{fig:poly} shows us that the adaptive controller is performing quite well despite large initial uncertainty ($\lambda_0>>1$), communication/localization constraints and decentralized implementation. Although this observation is empirical at this point it shows that \algoref{algo:DLAR} is a promising approach even in the case of large parameter uncertainties.

\begin{figure*}[ht]
\centering
\subcaptionbox{}{\includegraphics{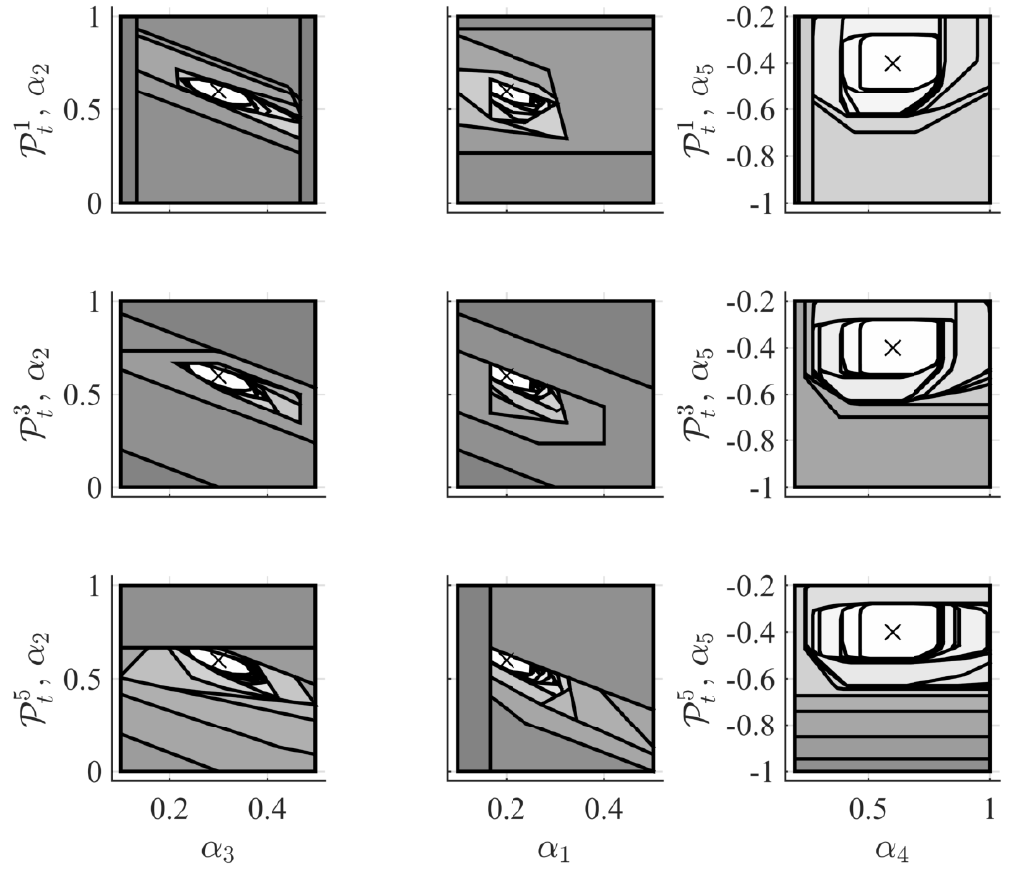}}
\subcaptionbox{}{\includegraphics{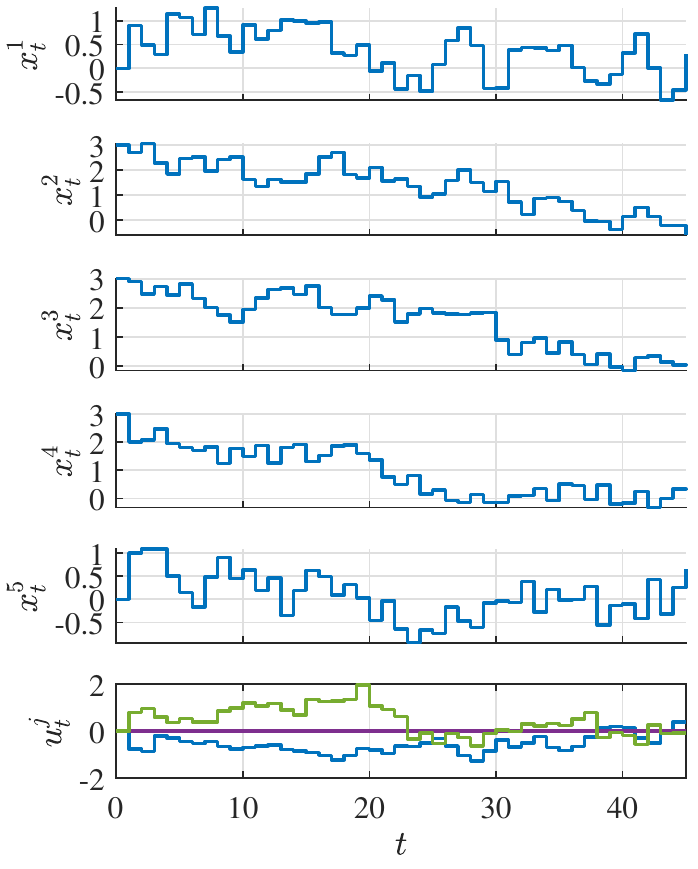}}
\caption{Left: Overlay of projections of $\Pt{t}^j$ onto different coordinates for different time steps. Each row corresponds to a different subsystem $x_1$, $x_3$, $x_5$ (top to bottom). Shading indicates time of computation with shades lightening as simulation time passes. Right: state and input trajectories of closed loop simulation with \algoref{algo:DLAR} and uncertainties on $\alpha$ described in \secref{sec:introex}.}
\label{fig:poly}
\end{figure*}

\begin{figure}[ht]
\centering
\subcaptionbox{}{\includegraphics{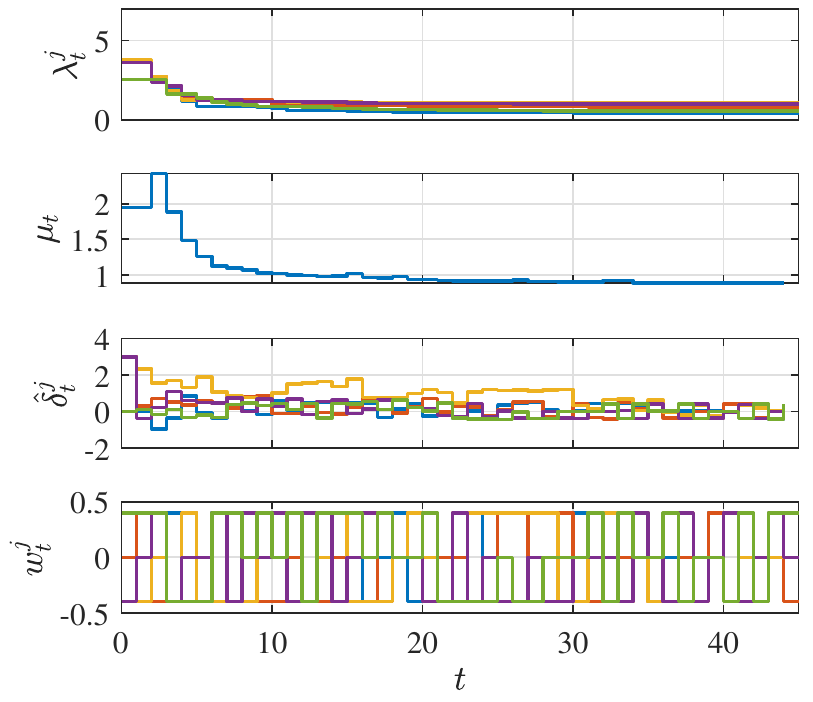}}
\caption{Computed margins $\lambda^j_t$, effective disturbances $\hat{\delta}^j_t$ and disturbance $w^j_t$ for every node $x^j$. $\mu_t = \sum\limits_{k} \left\| \Delta_t(k)\right\|_1$ is computed with the real system and the controller at $t$.} 
\label{fig:margindist}
\end{figure}

\begin{figure}[ht]
\centering
\subcaptionbox{}{\includegraphics{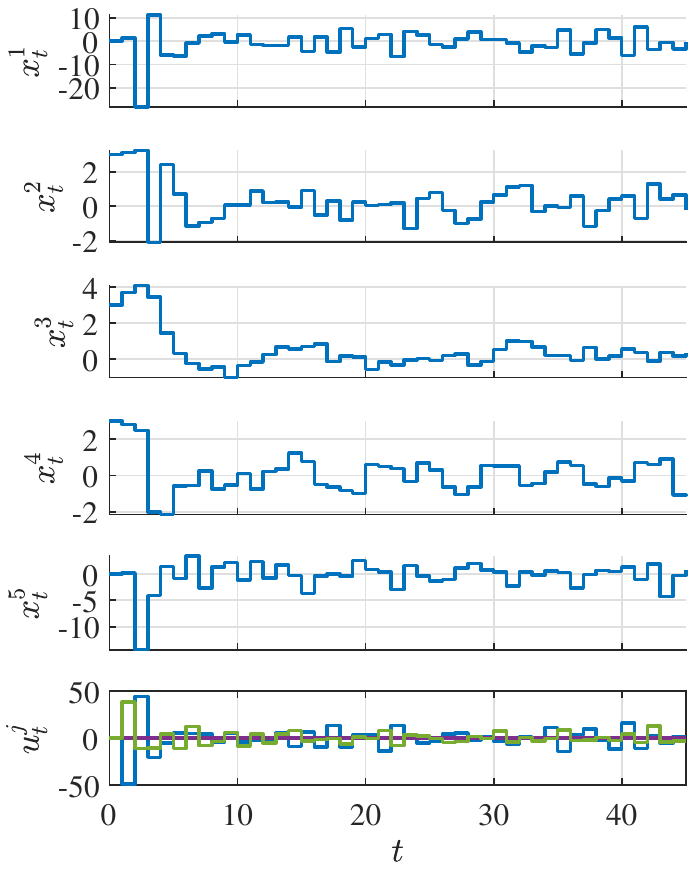}}
\caption{State and input trajectories for closed loop simulation of controller \algoref{algo:DLAR} with perfect parameter knowledge $\alpha$.}
\label{fig:truesys}
\end{figure}

\section{Conclusion and Future Work}
\noindent In this work, we derived a novel framework for adaptive and robust control of linear time-invariant systems. Using the new SLS framework \cite{slsacc} we derived time-varying robustness results which can be used as a new way to design stable adaptations in control systems. With this result we develop a robust adaptive control scheme for linear systems with state feedback under bounded parameter uncertainties, disturbances and noise. The resulting control system continuously infers polytopes of parameters that are consistent with the collected observations and use these sets to compute new robust controllers that improve control performance. In particular, inference of uncertainty sets is done efficiently, since structural properties of the system matrices are directly exploited and subsystems only need to model the dynamics in their local region. Moreover, we present how this approach can incorporate communication constraints and allows for a distributed and scalable control implementation. For the case of small initial uncertainties, a stability proof and worst-case bounds are provided for the closed loop. Finally, simulations with a chain-system empirically show that performance does not degrade too much even if we have large initial uncertainties in the parameters.

Future research will be focused on deriving performance bounds of this technique when dealing with a broader class of uncertainties and reducing the computational cost of the optimization procedures needed in the algorithm.







\bibliographystyle{IEEEtran}
\bibliography{IEEEabrv,refs}

\end{document}